\title{Moderate-length lifted quantum Tanner codes}
\author{Virgile Guemard$^{1, 2}$ \and Gilles Zemor$^{3,4}$}
\date{
	$^1$Aix Marseille Université, I2M, UMR 7373, 13453 Marseille, France\\%
        $^2$Inria Paris, France \\
        $^3$Institut de Mathématiques de Bordeaux, UMR 5251\\
        $^4$Institut universitaire de France\\[2ex]%
	\today
}
\theoremstyle{plain}
\newtheorem{theorem}{Theorem}[section]
\newtheorem{lemma}[theorem]{Lemma}
\newtheorem{proposition}[theorem]{Proposition}
\theoremstyle{definition}
\newtheorem{definition}[theorem]{Definition}
\newtheorem{example}[theorem]{Example}
\newtheorem{remark}[theorem]{Remark}
\renewcommand{\fnum@figure}{FIG. \thefigure}
\NewDocumentCommand{\evalat}{sO{\big}mm}{%
  \IfBooleanTF{#1}
   {\mleft. #3 \mright|_{#4}}
   {#3#2|_{#4}}%
}
\renewcommand{\thesubsection}{\thesection.\arabic{subsection}}
\renewcommand{\thesection}{\arabic{section}}
\renewcommand{\thesubsubsection}{\thesubsection.\arabic{subsubsection}} 
\newtheorem*{rep@theorem}{\rep@title}
\newcommand{\newreptheorem}[2]{%
\newenvironment{rep#1}[1]{%
 \def\rep@title{#2 \ref{##1}}%
 \begin{rep@theorem}}%
 {\end{rep@theorem}}}
\begin{document}
	\maketitle

\begin{abstract}
We introduce new families of quantum Tanner codes, a class of quantum codes that first appeared in the work of Leverrier and Z\'emor \cite{Leverrier2022}. These codes are built from two classical Tanner codes, for which the underlying graphs are extracted from coverings of 2D geometrical complexes, and the local linear codes are tensor-products of cyclic or double-circulant linear codes. The advantage of code lifting is that, for any lift of odd index $t$ of an $[[n,k,d]]$-code, we can adapt the study of the transfer homomorphism arising in cellular homology to describe symmetries of its logical operators and to establish that its dimension is lower bounded by $k$, and its distance is upper bounded by $t\cdot d$. Moreover, when the dimension of the lifted code is equal to $k$, its distance is lower bounded by $d$. These parameter bounds also apply to the previous methods of code lifting \cite{GuemardLiftIEEE}. Finally, We present several explicit families, and identify instances of moderate length quantum codes which are degenerate, have low check weight, and whose distance surpasses the square root of the code length. Among them, we report the existence of a $[[96,2,12]]$-code whose distance growth saturates our bound, and for which half of the checks are of weight 8 and the other half of weight 4.
\end{abstract}

\section{Introduction}

Quantum error correcting codes are believed to be unavoidable constituents of quantum computing architectures. Among them, quantum low-density parity-check (LDPC) codes are the leading candidates for physical implementations, because their low connectivity translates into a limitation of the error propagation between the hardware constituents. Significant progress has been made recently in the construction of good quantum LDPC codes \cite{Panteleev2021,Leverrier2022,LinHsie2022}, i.e. codes families with constant rate and relative distance in the asymptotic limit. These families can constitute a valuable source of inspiration to devise new types of codes of moderate length, following the effort in the design of near term implementable codes \cite{Panteleev2020,Pryadko2022,Bravyi2024}. \par

In 2022, Leverrier and Z\'emor \cite{Leverrier2022} generalized the Sipser-Spielman version \cite{Sipser1996} of classical Tanner codes \cite{Tanner1981ARA} to devise a new family of asymptotically good CSS codes, which they call quantum Tanner codes. The methodology of \cite{Leverrier2022} relies on two components: a family of geometric complexes and a set of short linear codes. They instantiated it by considering a class of 2D square complexes known as left-right Cvayley complexes, first used in the coding context in \cite{Dinur2021}. Given one of these complexes, its face-vertex incidence structure generates two graphs, each used to define a classical Tanner code. The local structure of the complex allows for classical tensor-product codes to be used as local codes, which is key to ensure commutation between stabilizers of the quantum code. Although this CSS construction is state-of-the-art in the asymptotic regime, short-length examples based on the same ideas have not yet been developed.\par

This article introduces new families of quantum Tanner codes, inspired by the Leverrier-Z\'emor construction \cite{Leverrier2022,leverrier2022decoding}. The main novelty of our approach is that we start by defining a quantum Tanner code on a square complex, which is not necessarily a left-right Cayley complex or a constant-degree graph as in \cite{Mostad2024}. Recall that exploring beyond constant degree Tanner graphs has been key to devise capacity-approaching classical LDPC codes \cite{910578}. We then use our construction to build larger quantum Tanner codes based on the idea of code lifting.\par

Our approach to lift is also largely inspired by the method of \cite{Panteleev2021} to lift classical Tanner codes. Here, instead of using only a graph and a local code, we work with a 2-dimensional square complex over which two classical Tanner codes can be defined, forming an input CSS code. A covering of the square complex then allows us to simultaneously define lifts of these two classical Tanner codes. The maximum weight of the rows and columns of the lifted check matrices remains equal to that of the input code, making it well-suited for LDPC constructions. A peculiarity of our lifting method is that the size of the overlapping support of an $X$- and a $Z$-check can be reduced under lifting. This shows that this method is different from the one defined in \cite{GuemardLiftIEEE} for which the size of the overlap is conserved under lifting.\par
The drawback of our method is that lifting a CSS code does not always yield a valid CSS code. To ensure that lifting yields a valid CSS code, the symmetries of the local codes must be compatible with the structure of the underlying 2D complex. The advantage is that for lifts of odd indices, we can adapt the study of the transfer homomorphism, arising in cellular homology \cite{HatcherTopo}, to establish a lower bound on the dimension of lifted codes and to describe the structure of their logical operators. Namely, a lifted $[[\tilde n, \tilde k,\tilde d]]$-code of odd index $t$ of a $[[n,k,d]]$-code, has $\tilde n=tn$, $\tilde k\geq k$ and $\tilde d\leq td$. Moreover, when $\tilde k=k$, it follows that $\tilde d\geq d$.  The technics and parameter bounds developed here also apply to the code lifting method of \cite{GuemardLiftIEEE}, based on coverings of the Tanner cone-complex.\par

We illustrate the lifting of quantum Tanner codes through explicit examples. In \cite{GuemardLiftIEEE}, new code constructions were introduced using certain 2D geometric complexes, namely square subdivisions of presentation complexes of finitely presented groups; the CSS codes were built by interpreting the vertices, in the subdivision of these complexes, as parity checks and qubits. Here, following \cite{Leverrier2022}, we adopt a different perspective and interpret the faces as qubits. The novelty of our method is to first construct short-length quantum Tanner codes on these subdivided complexes by fixing a specific configuration of the local codes. We then obtain the lifted Tanner code by generating coverings of these square complexes and lift the local code configuration to the covering spaces. In certain cases, the effect of the local code is to suppress the short codewords, causing the support of minimal codewords to spread out across the complex. In this way, we are able to construct LDPC codes whose distance squared exceeds the blocklength.\par

It has proven challenging to determine analytically the parameters of our lifted Tanner codes. We therefore follow a numerical approach. For selected input quantum Tanner codes, we generate all possible lifts of index 1 to 30, producing codes of length less than 800 qubits. An upper bound for the distance is computed using GAP and the package QDistRnd \cite{Pryadko2022}. We report only the codes with the highest parameters and leave a more rigorous analysis for future work.\\

This article is organized as follows.  In Section \ref{section Preliminaries}, we set the notation for graphs, square-complexes and covering maps. We also review classical Tanner codes and our preferred method for lifting them.\par

In Section~\ref{section: quantum Tanner codes and their lifts}, we present a method to lift a quantum Tanner code. The idea is a generalization of the lift of a classical Tanner code, where the central object is a covering of a square complex rather than a graph. We also extend the study of the transfer homomorphism from cellular homology to characterize the structure of any lifted codes of odd index.\par

Lastly, in Section~\ref{section:New constructions of quantum Tanner codes}, we introduce new explicit constructions of quantum Tanner codes. We then perform an exhaustive numerical search for lifted codes and report specific instances with the highest distances.

\section{Preliminaries}\label{section Preliminaries}

\subsection{Graphs and square-complexes}\label{Graphs, square-complexes and covering maps}

In this work, crucial components of the linear and quantum code constructions are graphs and square complexes. We therefore set some notations and state elementary facts about these objects, which will be used throughout this article.\par

\paragraph{Graphs.} We denote by $\mathcal G:=(V,E)$ an undirected graph with set of vertices $V$ and edges $E$. An \textit{edge} between two vertices $u$ and $v$ is an unordered pair $\{u,v\}=\{v,u\}$. In this work, graphs have no self-loops, i.e., edges of the form $\{u,u\}$, unless otherwise stated, but they can have multi-edges. A \textit{directed edge} associated to $\{u,v\}\in E$ is an edge for which the order matters, written $e=[u,v]$ for an edge with initial vertex $u$ and terminal vertex $v$. The inverse of an edge is defined as $[u,v]^{-1}:=[v,u]$. We denote by $E^*$ the set of oriented edges. Its cardinality is twice that of $E$. A path of length $t$ in the graph is a sequence of directed edges $e_1\cdot e_2\dots e_t$ such that the terminal vertex of $e_i$ is the initial vertex of $e_{i+1}$, for $i<t$. A path is closed if the terminal vertex of $e_t$ is the initial vertex of $e_1$. The edge-neighborhood of a vertex $v$, denoted $E(v)$, is the set of edges connected to $v$, and its cardinality is the \textit{degree} of $v$. A \textit{$t$-cycle} in $\mathcal G$ is a connected subgraph constituted of $t$ vertices (of degree 2 in the subgraph) and $t$ edges, forming a single cycle in $\mathcal G$. \\

\paragraph{Square complexes.} A square complex is defined as a graph with a set of preferred subgraphs which are \textit{4-cycles}. We denote it $\mathcal S:=(V,E,F)$, where $(V,E)$ is the underlying graph, also called its 1-skeleton, and $F$ is a set of 4-cycles. Notice that a square complex with an empty set of faces is also a graph. We say that $\mathcal S$ is a \textit{bipartite square complex} if $(V,E)$ is a bipartite graph. A face $f\in F$ can be described by its corresponding $4$-cycle, or by a closed path, such as $\alpha_f=[v_1,v_2].[v_2,v_3].[v_3,v_4].[v_4,v_1]$. The \textit{face-neighborhood} of a vertex $v$ is the set of faces to which it is incident, and is denoted $F(v)$. Moreover, we say that two faces are adjacent in $\mathcal S$ if they share a common edge.\par

\paragraph{Topological realization.}  the \textit{topological realization} of a square complex is a topological space divided into 2D and 1D subspaces. For us, it is most convenient to view the topological realization of a square complex as a \textit{regular }2D cell complex \cite{HatcherTopo} (or CW complex). These are topological spaces obtained by successively gluing cells, with gluing maps being homeomorphisms\footnote{\hangindent=1em  \hangafter=1 Examples of gluing maps that are not homeomorphisms: the map $\partial \mathbb{I} \to p$, mapping both end-points of an interval to the same point, yielding a 1D-sphere; the map $\partial D^2\to p$, mapping the boundary of the disk, a circle, to a point, yielding a 2D-sphere.}. More precisely, in the topological realization of a square complex, for each combinatorial face, we glue an actual face, which is homeomorphic to a two-dimensional disk $D^2$, along the path formed by a 4-cycle. The resulting complex is then endowed with the topology of a CW complex.\par
Here, when we construct graphs and square complexes, we identify them with their topological realization. This is because we consider their \textit{fundamental group}, defined in the next section, and perform geometrical operations on them. It is important to note that every abstract 2D complex defines a topological one and vice versa.\par

\subsection{Fundamental group}\label{section fundamental group}

For considerations related to covering spaces in Section \ref{section covering maps}, it will be important to introduce the fundamental group, a central object in algebraic topology \cite{HatcherTopo, Lyndon2001}.\par

\paragraph{Definition.} We first recall that given a topological space $K$ the \textit{fundamental group} $\pi_1(K,v)$ is the group of homotopy classes of loops based at $v$, with group operation given by loop concatenation. For any other choice of basepoint $v'$, the groups $\pi_1(K,v)$ and $\pi_1(K,v')$ are isomorphic. In what follows, we will often omit the choice of basepoint and simply write $\pi_1(K)$. A space is called \textit{simply connected} if its fundamental group is trivial.\par

\paragraph{Combinatorial approach for square-complexes.} As previously mentioned, a square complex can be viewed as a 2D cell complex via its topological realization. In that setting, the fundamental group is defined as above. Moreover, there is an equivalent description of the fundamental group that solely relies on the combinatorial data of the square complex. This description, which we now review, is convenient for computations, and it is often used to determine presentations of the fundamental group.\par

Given a closed path $e_1\cdot e_2\dots e_n $ an elementary reduction consists of either removing two consecutive edges $e_i$ and $e_{i+1}$ if $e_{i+1} = e_i^{-1}$, or removing four consecutive edges if the corresponding sequence forms a face in the square complex $K$. The equivalence relation generated by such reductions is called \textit{homotopy}; it is compatible with concatenation of closed path, where the concatenation of two paths $\alpha:=e_1\cdot e_2\dots e_t$ and $\alpha':=e'_1\cdot e'_2\dots e'_{t'}$, such that the terminal vertex of $e_t$ is the initial vertex of $e_1'$, is the path  $\alpha\cdot\alpha':=e_1\cdot e_2\dots e_t\cdot e'_1\cdot e'_2\dots e'_{t'}$. The fundamental group $\pi_1(\mathcal S,v)$ is then isomorphic to the set of homotopy classes of closed paths based at $v$, with concatenation as the group operation. It can be shown that $\pi_1(\mathcal S,v) \cong \pi_1(\mathcal S,v')$ for any two vertices $v, v'$, if its 1-skeleton $\mathcal S^1:=(V,E)$ is a connected graph, and thus we often write $\pi_1(\mathcal S)$.\par

If a square complex $(V,E,F)$ has no faces, i.e. it is just the connected graph $(V,E)$, then the fundamental group of its topological realization is a free group of rank $|E| - |V| + 1$. For a general 2D complex having a graph $\mathcal G$ for 1-skeleton, the fundamental group is given by the quotient $\pi_1(\mathcal G)/N$, where $N$ is the normal closure of the subgroup generated by elements corresponding to 4-cycles.\par

\subsection{Covering maps}\label{section covering maps}

We now review some results about covering maps from \cite{GROSS1977273,HatcherTopo, Lyndon2001}. While some of them apply to topological spaces in general, we will only consider them in the context of graphs and square-complexes of finite degrees with a single connected component.\par

\paragraph{Graph coverings.} Let $\mathcal G$ be a graph, that we identify with its topological realization. A map between graphs $p:\mathcal G' \to \mathcal G$ is a covering if and only if it is continuous, it maps interior of edges homeomorphically onto interior of edges, and every vertex of $\mathcal G'$ has a neighborhood on which the restriction of $p$ is a homeomorphism.\par
For graphs, we may prefer a combinatorial definition of covering. Let us now see the graphs above simply as sets of vertices and edges with no self-loops. Then, a graph covering is a surjective map $p:\mathcal G' \to \mathcal G$ that maps edges to edges and vertices to vertices and such that the restriction of $p$ to the edge neighborhood of every vertex is a bijection. It follows that a vertex $v$ in $\mathcal G'$ and its image $p(v)$ have the same degree. The cardinality of the inverse image of any vertex or edge is a constant called the \textit{index} of the covering.\par
From  \cite{10.5555/29358}, such a map can be given a topological realization, which is a continuous function from a topological realization of $\mathcal G'$ to a topological realization of $\mathcal G$ whose restriction to the interior of any edge of $\mathcal G'$ is a homeomorphism and is consistent with $p$. A map $p$ between graphs is a combinatorial covering if and only if a topological realization of $p$ is a covering in the usual sense.\par

\paragraph{Permutation voltage.} We now summarize a construction for graph coverings that can generate all possible finite coverings, as shown in \cite{GROSS1977273}. Let $r>0$ be an integer and $S_r$ be the symmetric group on the set $[r]:=\{0,\dots,r-1\}$. Given a graph $\mathcal G=(V,E)$, we first assign a permutation $\pi_e\in S_r$ to each oriented edge $e\in  E^*$, with the constraint that $\pi_{e^{-1}}=\pi_{e}^{-1}$. This step is called a \textit{permutation voltage assignment}. We construct a graph $\tilde{\mathcal G}$, called a \textit{lifted graph}, with set of vertices in bijection with $V\times [r]$ and set of edges in bijection with $E\times [r]$, where an element $c$ of $\tilde{\mathcal G}$, vertex or edge, is written $(c,i)$ for $i\in [r]$. An edge $(e,i)$ in the graph $\tilde{\mathcal G}$, where $e=[u,v]$, connects $(u,i)$ and $(v,\pi_e (i ))$.\par

As proved in \cite{GROSS1977273}, given a lifted graph constructed as above, it is possible to define a \textit{covering map} $p:\tilde{\mathcal G}\rightarrow  \mathcal G$ acting on an element $(c,i)$, vertex or edge, by $p((c,i))=c$. Conversely, for any $r$-sheeted graph covering $p:\mathcal G'\rightarrow  \mathcal G$, there exists a permutation voltage assignment of $\mathcal G$ in $S_r$, such that the associated lifted graph is isomorphic to $\mathcal G'$. We indeed verify that all the conditions are verified, and in particular a vertex $(v,i)$ and its projection $v$ have the same degree. Moreover, an edge $\tilde e=[\tilde u,\tilde v]$ projects onto an edge $e=[p(\tilde u),p(\tilde v)]$.\par

\paragraph{Square complex coverings.} Given a square complex $\mathcal S=(V,E,F)$, an index-$r$ covering $\tilde{\mathcal S}$ is a graph covering $p:\tilde{\mathcal G}\rightarrow  \mathcal G$ which does not break the faces. That is, if $\alpha$ is a 4-cycle in $\mathcal G$ describing a face of $F$, then $p^{-1}(\alpha)$ is a set of $r$ disjoint 4-cycles. The set of faces in the covering is in bijection with $F\times [r]$, and given a face $f\in F$ described by the path 
\[
e_1.e_2.e_3.e_4=[v_1,v_2].[v_2,v_3].[v_3,v_4].[v_4,v_1],
\]
then for all $i\in [r]$, the face labeled $(f,i)$ in $\tilde{\mathcal S}$ can be described by the following 4-cycle:
\begin{center}
\begin{tikzpicture}[scale=1,
    every node/.style={scale=1},
    decoration={markings, mark=at position 0.5 with {\arrow{stealth}}}, 
    postaction={decorate}
  ]
  \node[draw, circle, fill=black, inner sep=0.6pt, label=left:{$(v_1,i)$}] (v1) at (0,2) {};
  \node[draw, circle, fill=black, inner sep=0.6pt, label=right:{$(v_2,\pi_{e_1}(i))$}] (v2) at (3,2) {};
  \node[draw, circle, fill=black, inner sep=0.6pt, label=right:{$(v_3,\pi_{e_2}\circ \pi_{e_1}(i))$}] (v3) at (3,0) {};
  \node[draw, circle, fill=black, inner sep=0.6pt, label=left:{$(v_4,\pi_{e_3}\circ\pi_{e_2}\circ \pi_{e_1}(i))$}] (v4) at (0,0) {};

  \draw[postaction={decorate}] (v1) -- node[above] {$(e_1,i)$} (v2);
  \draw[postaction={decorate}] (v2) -- node[right] {$(e_2,\pi_{e_1}(i))$} (v3);
  \draw[postaction={decorate}] (v3) -- node[below] {$(e_3,\pi_{e_2}\circ \pi_{e_1}(i))$} (v4);
  \draw[postaction={decorate}] (v4) -- node[left] {$(e_4,\pi_{e_3}\circ\pi_{e_2}\circ \pi_{e_1}(i))$} (v1);
\end{tikzpicture}
\end{center}
in other words, for any $4$-cycle described by a closed path $e_1.e_2.e_3.e_4$ corresponding to an element of $F$, the permutation $\pi_{e_4}\circ\pi_{e_3}\circ\pi_{e_2}\circ \pi_{e_1}$ is the identity. We can then define an index-$r$ covering map $p:\tilde{\mathcal S}\to \mathcal S$ acting on an element $(c,i)$ (vertex, edge or face) by $p((c,i))=c$, and therefore $p$ maps faces to faces. The inverse image of any vertex, edge or face is a set of $r$ vertices, edges or faces.\par

\paragraph{Deck transformations.} We now consider a square-complex $\mathcal S$, possibly with an empty set of faces, i.e. a graph. Given a covering $p:\tilde{\mathcal S}\to \mathcal S$, a \textit{deck transformation}\footnote{This name suggests an analogy to the mixing of a deck of cards, which are called sheets in the context of covering spaces.} is an automorphism $\psi:\tilde{\mathcal S}\rightarrow \tilde{\mathcal S}$ such that $p\circ \psi=p$. The set of deck transformations, endowed with the operation of composition of maps, forms a group denoted $\operatorname{deck}(p)$. It is called the \textit{group of deck transformations} and acts on the left on $\tilde{\mathcal S}$. A \textit{Galois covering} is a covering enjoying a left, free and transitive action of $\operatorname{deck}(p)$ on the fiber. For any Galois covering, $p:\tilde{\mathcal S}\rightarrow \mathcal S$, it can be shown that $\operatorname{deck}(p)\setminus \tilde{\mathcal S}\cong \mathcal S$.\par 
In the rest of this section, by a covering $p:\tilde{\mathcal S}\to \mathcal S$ we always mean a \textit{finite} \textit{connected} covering, i.e. one whose index is finite and for which $\tilde{\mathcal S}$ is also a connected complex. The following crucial results of this section apply in this context. \par

\paragraph{Classification of coverings.} The theory of connected covering maps over a complex depends on the structure of its fundamental group in the following way. For every index $r$ subgroup $H $ of the fundamental group $\pi_1(\mathcal S)$ there exists an index $r$ covering $p:\mathcal S_H\to \mathcal S$, mapping the basepoint $\tilde v$ of $\mathcal S_H$ to $v$, and inducing an injective homomorphism $p_\#:\pi_1(\mathcal S_H,\tilde v)\to \pi_1(\mathcal S,v)$, such that $p_\#\pi_1(\mathcal S_H , \tilde v) = H$. A covering $p:\mathcal S_H\to \mathcal S$ is called \textit{Galois} or \textit{normal} when $H$ is a normal subgroup of $\pi_1(\mathcal S,v)$. All the coverings that we will study in Section \ref{section:New constructions of quantum Tanner codes} are of this form. It can be shown that, for such a covering map, we have the isomorphism $\operatorname{deck}(p)\cong\pi_1(\mathcal S,v)/H$. The most important result on coverings is the classification theorem known as the Galois correspondence, which we state in our restricted setting of finite index coverings of complexes with finitely many cells.\footnote{This theorem applies in a much broader context, but this is sufficient for our applications.}

\begin{theorem}[Galois correspondence]\label{Theorem Galois correspond}
For all $r\in \mathbb N$, there is a bijection between the set of basepoint-preserving isomorphism classes of index-$r$ connected covering spaces of a complex $\mathcal S$ and the set of index-$r$ subgroups of $\pi_1 (\mathcal S)$. Given such a covering $p:\tilde{\mathcal S}\to \mathcal S$, it is obtained by associating the subgroup $H=p_\#\pi_1 (\tilde{\mathcal S} )$ to the covering space $\tilde{\mathcal S}$. The index of the associated covering is equal to the index $[\pi_1(\mathcal S):H]$.
\end{theorem}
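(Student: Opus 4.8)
The plan is to show that the assignment $\tilde{\mathcal S}\mapsto p_\#\pi_1(\tilde{\mathcal S},\tilde v)$ is a well-defined bijection onto the index-$r$ subgroups, using the three standard ingredients of covering space theory, all of which are available here because, as noted above, our combinatorial fundamental group coincides with the topological one: unique lifting of edge-paths once a lift of the initial vertex has been chosen; the homotopy lifting property (homotopic circuits lift to circuits with the same endpoint); and the lifting criterion together with the rigidity statement that two lifts of a map from a connected complex which agree at one vertex agree everywhere. Surjectivity of the assignment onto \emph{all} subgroups of index $r$ is exactly the existence statement recalled in the paragraph preceding the theorem, so the work reduces to proving (i) the image subgroup has index $r$, and (ii) the assignment is injective on basepoint-preserving isomorphism classes.

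For (i), fix an index-$r$ covering $p:\tilde{\mathcal S}\to\mathcal S$ with $p(\tilde v)=v$ and set $H:=p_\#\pi_1(\tilde{\mathcal S},\tilde v)$; unique path lifting makes $p_\#$ injective, so $H$ is a subgroup. I would construct a bijection $\Phi$ between the fibre $p^{-1}(v)$ and the set of cosets $H\backslash\pi_1(\mathcal S,v)$: send a vertex $\tilde w\in p^{-1}(v)$ to $H[\,p\tilde\gamma\,]$, where $\tilde\gamma$ is an edge-path from $\tilde v$ to $\tilde w$ in the connected complex $\tilde{\mathcal S}$. Independence of the choice of $\tilde\gamma$ (two choices differ by a loop at $\tilde v$, whose projection lies in $H$) shows $\Phi$ is well defined; lifting loops at $v$ to paths starting at $\tilde v$ shows $\Phi$ is surjective; and if $H[\,p\tilde\gamma_1\,]=H[\,p\tilde\gamma_2\,]$ then $[\,p(\tilde\gamma_1\tilde\gamma_2^{-1})\,]\in H$, so by homotopy lifting its lift starting at $\tilde v$ is a loop, forcing $\tilde\gamma_1$ and $\tilde\gamma_2$ to share an endpoint, which gives injectivity of $\Phi$. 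Hence $[\pi_1(\mathcal S,v):H]=|p^{-1}(v)|=r$, which also establishes the final sentence of the theorem.

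For (ii), suppose $p_1:\tilde{\mathcal S}_1\to\mathcal S$ and $p_2:\tilde{\mathcal S}_2\to\mathcal S$ are index-$r$ coverings with $p_{1\#}\pi_1(\tilde{\mathcal S}_1,\tilde v_1)=p_{2\#}\pi_1(\tilde{\mathcal S}_2,\tilde v_2)=H$. Applying the lifting criterion to $p_1$ viewed as a map into $\mathcal S$ and the covering $p_2$ (the relevant containment of $\pi_1$-images is $H\subseteq H$) produces a morphism $q:\tilde{\mathcal S}_1\to\tilde{\mathcal S}_2$ with $p_2\circ q=p_1$ and $q(\tilde v_1)=\tilde v_2$; symmetrically one gets $q':\tilde{\mathcal S}_2\to\tilde{\mathcal S}_1$ with $p_1\circ q'=p_2$ and $q'(\tilde v_2)=\tilde v_1$. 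Then $q'\circ q$ and the identity are both lifts of $p_1$ through $p_1$ fixing $\tilde v_1$, hence equal by the rigidity of lifts on the connected complex $\tilde{\mathcal S}_1$; likewise $q\circ q'=\mathrm{id}$. So $q$ is a basepoint-preserving isomorphism of coverings, and the assignment is injective; combined with the surjectivity above this is the asserted bijection.

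I expect the main obstacle to be the careful verification of the combinatorial lifting lemmas on square-complexes rather than merely quoting their topological counterparts: one must check that an elementary reduction of a circuit lifts to an elementary reduction upstairs, including the case -- special to square-complexes -- where the four edges being collapsed bound a face of $F$, since that is what makes the monodromy a genuine homotopy invariant and, in turn, makes $p_\#$ well behaved. Alternatively, invoking \cite{HatcherTopo} for the lifting criterion and homotopy lifting (legitimate since the two notions of fundamental group agree) reduces the remaining task to checking that the covering $\mathcal S_H$ built from the index-$r$ recipe recalled above is again a finite square-complex with the stated cell structure, which is immediate; I would favour this route to keep the argument short.
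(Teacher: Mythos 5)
The paper does not actually prove Theorem~\ref{Theorem Galois correspond}: it is stated as a classical result imported from \cite{GROSS1977273,HatcherTopo,Lyndon2001}, with a footnote remarking that it holds in much greater generality, and only the existence direction (the construction of $\mathcal S_H$ for a given $H$) is revisited combinatorially in Appendix~\ref{appendix lift}. So there is no in-paper proof to compare against; your attempt supplies what the authors took as given.

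That said, your argument is the standard and correct one. The coset–fibre bijection $\Phi$ yields $[\pi_1(\mathcal S,v):H]=|p^{-1}(v)|=r$, the two-sided application of the lifting criterion plus rigidity of lifts gives injectivity on basepoint-preserving isomorphism classes, and surjectivity is the existence statement the paper already records before the theorem. Two small remarks. First, you never explicitly check that the assignment $\tilde{\mathcal S}\mapsto p_\#\pi_1(\tilde{\mathcal S},\tilde v)$ is well defined on isomorphism classes — that a basepoint-preserving isomorphism $\phi$ of coverings gives $p_{1\#}\pi_1=p_{2\#}\phi_\#\pi_1=p_{2\#}\pi_1$ — but this is immediate and worth a sentence. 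Second, you are right that the only genuine content beyond quoting \cite{HatcherTopo} is the equivalence between the combinatorial fundamental group of Section~\ref{Graphs, square-complexes and covering maps} (elementary reductions across $4$-cycles of $F$) and the CW fundamental group, which the paper asserts in Section~\ref{Graphs, square-complexes and covering maps}; your preferred ``short route'' of invoking Hatcher modulo that equivalence is exactly how the paper treats the matter, and is the cleanest path.
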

This theorem motivates our exhaustive searches of all possible Galois coverings in Section \ref{section:New constructions of quantum Tanner codes}.

\subsection{Tanner codes and their lifts}\label{section Tanner codes and their lifts}

In this section, we recall some definitions and set up notations related to linear codes. The central components of our quantum codes are classical Tanner codes. We review their construction and a method to lift them into larger codes.\par

\paragraph{Linear codes.} We denote the parameters of a binary linear code $C\subseteq \mathbb{F}_2^n  $  as $[n,k,d]$, where $n$ its the length, $k$ its dimension and $d$ its distance, defined as $d=\min\limits_{c\in C\setminus \{0\} }|c|$. If $k=0$, our convention is to set $d=\infty$. In what follows, a linear code will be defined by the image of a linear map $g:\mathbb F_2^k\to \mathbb F_2^n$,  or the kernel of a linear map, $h:\mathbb F_2^n\to \mathbb F_2^m$, respectively represented in a preferred basis $B$ by a generator matrix $G:=\operatorname{Mat}_B (g)$, and a parity-check matrix $H:=\operatorname{Mat}_B (h)$. Given $x\in \mathbb F_2^n$, the vector $s=Hx$ is called the $syndrome$ of $x$.  The generator matrix of $C$ is the parity-check matrix of its dual $C^\perp=\{x\in \mathbb F_2^n\: :\: \forall c\in C, \langle x,c\rangle=0\}$. A code is called self-dual when $C^\perp=C.$\par

\paragraph{Classical Tanner codes.} The central components of our quantum codes are classical Tanner codes, introduced in \cite{Tanner1981ARA}, and made popular by Sipser and Spielman \cite{Sipser1996}. We first review their construction, and then we describe a natural method to lift them into larger Tanner codes.\par

Let $\mathcal G=(V,E)$ be a graph on $|E|=n$ edges. In the following, $\mathbb{F}_2E :=  \bigoplus_{e\in E}\mathbb{F}_2 e$ denotes the space of formal linear combinations of elements of the set $E$, playing the role of basis vectors. Given a vertex, $v$, the space $\mathbb F_2E(v)$, is the restriction of $\mathbb F_2E $ to the edge neighborhood of $v$. We define a set of binary \textit{local codes} $C_V :=  (C_v )_{ v\in V}$ on the vertices of $\mathcal G$, where an element indexed by $v\in V$ is a code defined on the edge neighborhood of $v$, i.e. $C_v \subseteq \mathbb F_2E(v)$. For a vector $c\in \mathbb F_2 E$, we denote by $c|_{E(v)} $ its restriction to the edge-neighborhood of $v$.

\begin{definition}[Classical Tanner code]\label{definition classical Tanner code}
Let $\mathcal G=(V,E)$ be a graph and $C_V=(C_v )_{ v\in V}$ be a set of binary local codes. We define the Tanner code associated to $\mathcal{G}$ and $C_V$ as
\[ T(\mathcal G,C_V) :=  \{ c\in \mathbb F_2E : c|_{E(v)}\in C_v\text{ for all }v\in V\}.\]
\end{definition}
\paragraph{Parity-check matrix.} A Tanner code can also be defined as the kernel of a parity-check matrix. Suppose that $C_v$ is the kernel of a linear map $h_v:\mathbb F_2E(v)\to \mathbb F_2^{r_v}$ and let $i_v: \mathbb F_2^{r_v}\to \bigoplus_{u\in V} \mathbb F_2^{r_u}$ be the canonical linear embedding map. The Tanner code $C=T(\mathcal G,C_V)$ is the kernel of the map $
h :\mathbb F_2 E\xrightarrow[]{} \bigoplus_{u\in V} \mathbb F_2^{r_u}$ defined on an edge $e$ between vertices $v$ and $w$ by
\begin{equation}\label{equation:tanner code map}
h (e)= i_v\circ h_v (e) + i_w\circ h_w (e),
\end{equation}
and extended by linearity over $\mathbb F_2 E$. All vector spaces being given with a preferred basis, $h_v$ can be represented by a matrix $H_v$ of size $r_v\times |E(v)|$, and $h$ by a matrix $H$ of size $r\times n$, where $r=\sum_{v\in V} r_v$. Denoting $w$, $w_v$, the maximum row-weight of $H$ and $H_v$, respectively, and $q,q_v$ their column-weight, we have \[ w=\max\limits_{v\in V }w_v, \quad \quad q\leq\max \limits_{ \{v,w\}\in E} q_v+q_w .\]
This justifies why this technique is particularly suitable for constructing LDPC codes.

\subsection{Lift of Tanner codes}

Linear code lifting is an operation of major interest to produce interesting LDPC codes \cite{Thorpe2003} in a wide range of block lengths. Recall that lifting amounts to constructing a covering of the Tanner graph representation of the code. In this work, we focus on lifts of Tanner codes. In that case, out of the many possible lifts, there exists a preferred choice that we simply refer to as the \textit{lift of a Tanner code}, as we will not use other types of lifts. From now on, we consider a graph $\mathcal G$ that has no self-loops. Given a graph covering $p:\tilde{\mathcal G}\to \mathcal  G$, where $\tilde{\mathcal G}=(\tilde V,\tilde E)$, the linear extension $p_\#$ of $p$ on the set of edges induces an isomorphism\footnote{In the context of coverings of cell complexes, $p_\#$ acts on chains of  $\mathbb F_2\tilde E$ and $\mathbb F_2\tilde V$ in the cellular chain complex and is a chain map.}
\[ \mathbb F_2\tilde E(\tilde v)\overset{p_\#}{\cong} \mathbb F_2 E(v),\]for any $\tilde v\in p^{-1}(v)$. By definition, $p_\#$ sends basis vectors to basis vectors, i.e. edges to edges. 
\begin{definition}[Lift of Tanner code]\label{definition Lift of classical Tanner code}
 Let $C=T(\mathcal G,C_V)$ and $p:\tilde{\mathcal G}\to \mathcal  G$ be a graph covering, with $\tilde{\mathcal G}=(\tilde V, \tilde E)$. For each vertex $\tilde {v}\in \tilde V$, we associate the code $C_{\tilde v}\subseteq \mathbb F_2E(\tilde v)$, such that $C_{\tilde v} \overset{p_\#}{\cong} C_{p(\tilde v)}$. The lifted Tanner code associated to $p$ is the code $\tilde C=T(\tilde{\mathcal G},C_{\tilde V})$, where $C_{  \tilde V}= (C_{\tilde v})_{ \tilde v\in \tilde V}$. In other words,
\[ T(\tilde{\mathcal G},C_{\tilde V})=\{c\in \mathbb F_2 \tilde E \: :\: p_\#(c|_{\tilde E(\tilde v)})\in C_{p(\tilde v)}, \text{for all }\tilde v\in \tilde V\} .\]
\end{definition}

\paragraph{Parity-check matrix.} Suppose that for each $v\in V$, $C_v$ is the kernel of a map $h_v$ represented by a matrix $H_v$, as described after Equation \eqref{equation:tanner code map}. This induces a parity-check matrix $H$ of size $r\times n$ for the Tanner code $C$. In this context, by lifting we always mean that for any $\tilde v\in p^{-1}(v)$, we define the code $C_{\tilde v}$ as the kernel of a map $h_{\tilde v}$ represented by the matrix $H_{\tilde v}=H_v$. In this way, fixing a matrix representation $H$ induces a matrix representation $\tilde H$, the parity-check matrix of $\tilde C$, up to permutation of the rows. For a covering of index $t$, this matrix has size $tr\times tn$. Denoting $\tilde w$, $\tilde q$ as the maximum row and column-weight of $\tilde H$, we have
\begin{equation}\label{Equation: lift conserve degree}
    \tilde q=q, \quad \tilde w=w.
\end{equation}
This can also be deduced from the fact that the lift of a Tanner code is an example of lift of a classical code associated to a specific type of covering over the Tanner graph associated to $H$.\par

\subsection{Local codes}\label{section:local codes}

Throughout this work, for an integer $\ell\geq 1$, we denote $[\ell] :=   \{ 0,\dots ,\ell-1\}$. 

\paragraph{Classical tensor-product codes.} The classical local codes that will constitute our Tanner codes are built as duals of tensor-product codes. The tensor-product of two linear codes, $C\subseteq \mathbb F_2^{\ell_C}$ and $D\subseteq \mathbb F_2^{\ell_D}$, is the linear code denoted 
\[ \Pi=C\otimes D\subseteq \mathbb F_2^{\ell_C}\otimes \mathbb F_2^{ \ell_D}.\]
The canonical bases of $ \mathbb F_2^{\ell_C}$ and $\mathbb F_2^{\ell_D}$ are indexed by the elements of $[\ell_C]$ and $[\ell_D]$, respectively. These bases naturally induce the canonical product basis for $\mathbb F_2^{\ell_C}\otimes \mathbb F_2^{ \ell_D}$, whose elements are indexed by $[\ell_C]\times[\ell_D]$. In this basis, the codewords of the product code can be seen as the $\ell_C\times\ell_D$ matrices all of whose rows and columns are, respectively, codewords of $C$ and $D$. We may describe a parity-check matrix of $\Pi$ as follows. Let $H_C$ and $H_D$ be parity-check matrices for $C$ and $D$, respectively. Then the codewords of $\Pi$ are the vectors $x\in \mathbb F_2^{\ell_C}\otimes \mathbb F_2^{ \ell_D}$ satisfying both
\[(H_C\otimes I) x=0\quad \text{and}\quad (I\otimes H_D)x=0.\]
The first condition ensures that every row vector of $x$, seen as a $\ell_C\times\ell_D$ matrix, is in $C$, and the second that every column vector is in $D$. A parity-check matrix $H$ of $\Pi$ is hence given by vertically joining these two sets of equations:
\[H=\begin{bmatrix}
    H_C\otimes I\\
    I\otimes H_D
\end{bmatrix}.\]
It follows that the dual of the tensor-product is
\begin{equation}\label{equation dual of tensor product code}
\Pi^\perp=C^\perp\otimes \mathbb F_2^{\ell_D}+  \mathbb F_2^{\ell_C}\otimes D^\perp.
\end{equation}
Indeed, the rows of a parity-check matrix generate the dual code, and the first and second sets of equations of $H$ generate the first and second summands of Equation \eqref{equation dual of tensor product code}, respectively. \\

In the tensor codes considered later, one of the factors is a repetition code of length $2$ and the other one is either a cyclic or a double-circulant code. We therefore recall some essential facts about these two types of codes, taken from \cite{MacWilliamsSloane}.

\paragraph{Cyclic codes.} A binary cyclic code ${C} \subseteq \mathbb F_2^\ell$ is a code that is stable under the action of the automorphism $\rho :\mathbb  F_2^\ell \to  \mathbb F_2^\ell, (c_0, \ldots, c_{\ell-1})\mapsto(c_{\ell-1},c_0,\ldots, c_{\ell-2}) $.\par
A practical way to define a cyclic code is to identify $\mathbb F_2^\ell$ and the polynomial ring $ R_\ell :=  \mathbb F_2[X]/{(X^\ell-1)}$, using the $\mathbb F_2$-linear isomorphism \[ \varphi: R_\ell \longrightarrow \mathbb F_2^\ell,\quad  g_0 + g_1 X + \cdots + g_{\ell-1}X^{\ell-1}  \longmapsto 
  (g_0, \ldots,g_{\ell-1}).\]
In $ R_\ell$, the transformation corresponding to the rotation $\rho$ is multiplication by $X$. A code $C\subseteq R_\ell$ is cyclic if it is stable under this operation, and by linearity it is an ideal of $R_\ell$. These ideals are in one-to-one correspondence with the divisors of $X^\ell-1$. Given $g \in \mathbb F_2 [X]$ such that $g ~|~ X^\ell-1$, the code $\mathscr{C}_\ell (g)$ is defined as the code equal to the ideal generated by $g$, and $g$ is referred to as the \textit{generating polynomial} of the code. It is well-known that $\mathscr{C}_\ell (g)$ has dimension $\ell-\deg (g)$.\par

The dual of a cyclic code is cyclic, and its generating polynomial can be obtained as follows. Given a polynomial $h\in R_\ell$, let $\bar{h}  :=   X^{\deg h}h(1/X)$ denote the reciprocal polynomial of $h$. Over $\mathbb F_2$, $X^\ell - 1$ is equal to its reciprocal polynomial so, if $h~|~X^\ell-1$, then $\bar{h}~|~X^\ell-1$. Letting $h$ be the polynomial such that $gh = X^\ell-1$, also called the check-polynomial, we have  $\mathscr{C}_\ell (g)^{\bot} = \mathscr{C}_\ell (\bar h)$.\par

Given any polynomial $f\in R_\ell$, we define $\mathbb G(f)$ as the circulant matrix whose first row is $\varphi(f)$. Note that $G=\mathbb G(g)$ is a square generator matrix of the code. The row-weight of $G$ is hence directly given by the weight of $g$, i.e. its number of non-zero coefficients.\par

\paragraph{Double-circulant codes.} The second type of linear codes considered in the tensor-product are double-circulant codes. They form a subclass of quasi-cyclic codes, built from any polynomial $f\in R_\ell$. The double-circulant code $\mathscr{D}_{2\ell}(f)\subseteq \mathbb F_2^{2\ell}$ is defined as the image of the generator matrix \[G= \left[\begin{array}{c|c} 
 \mathbb G(1) & \mathbb G(f) \end{array}\right],\]
where $\mathbb G(1)$ corresponds to the $\ell\times\ell$ identity matrix. In Section~\ref{section:Quantum Tanner code with double-circulant local code}, the ambient space $\mathbb F_2^{2\ell}$ of a double-circulant code $\mathscr{D}_{2\ell}(f)$ is always understood to be given with the canonical basis indexed by the columns of the generating matrix in the above form. It is important to note that the code generated by $G$ is invariant under simultaneous cyclic permutation within the left and right blocks. That is, the code is stable under the automorphism 
\begin{equation*}
    \sigma:\begin{cases}
       \hfill \mathbb F_2^{2\ell}\hfill &\longrightarrow  \qquad \mathbb F_2^{2\ell} \\
        (c_0, c_1,\dots,c_{2\ell-1})&\mapsto
(c_{\ell-1},c_0, \ldots, c_{\ell-2},c_{2\ell-1},c_\ell, \ldots, c_{2\ell-2}).
    \end{cases}
\end{equation*}
Hence, $\mathbb Z_\ell$ is a subgroup of the automorphism group $\operatorname{Aut}(\mathscr{D}_{2\ell}(f))$. The dual of $\mathscr{D}_{2\ell}(f)$ is generated by the matrix \[ H=\left[\begin{array}{c|c} 
 \mathbb G(\bar f)&\mathbb G(1)   \end{array}\right],\]
and is also invariant under simultaneous cyclic permutations within the left and right blocks.

\section{Quantum Tanner codes and their lifts}\label{section: quantum Tanner codes and their lifts}

\subsection{Quantum Tanner codes}\label{section Quantum Tanner codes}

\paragraph{Quantum CSS codes.} CSS codes are instances of stabilizer quantum error correcting codes, which first appeared in the seminal work of Calderbank, Shor and Steane \cite{Calderbank1996,Steane1996,Stean1996Multiple}. Let $C_X$ and $C_Z$ be two linear codes with parity-check matrices $H_X$ and $H_Z$, such that $C_X^\perp\subseteq C_Z$, referred to as the \textit{orthogonality condition}. The CSS code $\text{CSS}(C_X,C_Z)$ is the subspace \[\operatorname{Span}\left \{\sum_{z\in C_Z^\perp }\ket{x+z}\text{ } \: :\:\text{ } x\in C_X \right \}\]
of $(\mathbb{C}^2)^{\otimes n}$. Its parameters, denoted $[[n,k,d]]$, are \textit{its length} $n$, \textit{dimension} $k = \dim(C_X / C_Z^\perp)$ and \textit{distance} $d$. The latter is defined as $d := \min(d_X, d_Z)$, with 
\[d_X=\min\limits_{c\in C_Z\setminus C_X^\perp}|c|, \quad 
d_Z=\min\limits_{c\in C_X\setminus C_Z^\perp}|c|.\] The maximum row weight and column weight in the parity check matrix $H_X$, respectively $H_Z$, are denoted $w_X,q_X$, respectively $w_Z, q_Z$. We denote $W:=\operatorname{max}(w_X,q_X,w_Z,q_Z)$. A family of codes is called \textit{Low Density Parity Check} (LDPC) if $W$ is upper bounded by a constant. If $k=0$, our convention is to set $d=\infty$.\par
The only CSS codes that we consider in this article are \textit{quantum Tanner codes}, a class of codes introduced by Leverrier and Z\'emor \cite{Leverrier2022}, in which $C_X,C_Z$ is a pair of classical Tanner codes.\par 
\begin{figure}[t]
    \centering
    \includegraphics[width=0.3\linewidth]{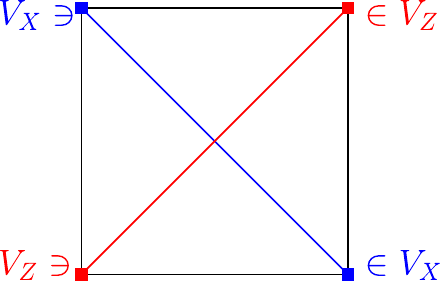}
    \caption{Bipartite square complex $\mathcal S=(V,E,F)$ in which one set of vertices is denoted $V_X$ and the other set $V_Z$. They give rise to two diagonal graphs, $\mathcal G_X^\Box=(V_X,F)$ represented in blue, and $\mathcal G_Z^\Box=(V_Z,F)$ in red.}
    \label{fig: diagonal graphs}
\end{figure}
\paragraph{Square complexes and diagonal graphs.} A practical way to build two Tanner codes $C_X$ and $C_Z$ satisfying the orthogonality condition is to generate them simultaneously by considering two specific graphs embedded in a square complex. Let $\mathcal S :=  (V,E,F)$ be a bipartite square complex and denote the partition $V=V_X \sqcup V_Z$. As remarked in \cite{Leverrier2022}, if we restrict the vertex set to $V_X$, each square face is now incident to only two vertices located at opposite corners. The set of squares $F$ can now be seen as a set of edges on $V_X$, and therefore it defines a graph that we denote by $\mathcal G_X^\Box=(V_X,F)$ and call the \textit{diagonal graph} on $V_X$. Similarly, restricting the vertex set to those of $V_Z$, we obtain the diagonal graph $\mathcal G_Z^\Box=(V_Z,F)$. These graphs, illustrated in Figure~\ref{fig: diagonal graphs}, can then be treated as the graph components of two classical Tanner codes whose set of bits are identified with $F$, and as so we can identify the ambient space of these codes as $\mathbb F_2F$.\par
\begin{figure}[t]
    \centering
    \includegraphics[width=0.4\linewidth]{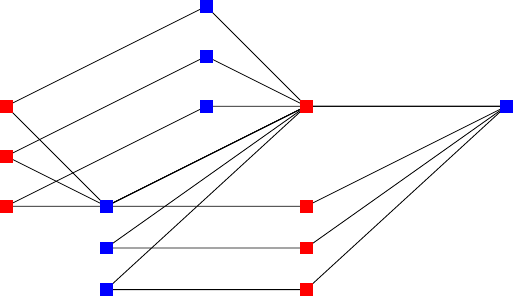}
    \caption{Subgraph of a bipartite square complex with a local product structure. }
    \label{fig:square complex local product}
\end{figure}
\paragraph{Local product structure.} The idea behind quantum Tanner codes is to use a square complex endowed with a specific structure.

\begin{definition}
Let $\mathcal S=(V,E,F)$ be a square complex. We say that $\mathcal S$ has a \textit{local product structure} if, for every vertex $v\in V$, there exist sets $A_v$ and $B_v$, and bijections $(\phi_v)_{v\in V}$ such that $\phi_v:F(v)\to A_v\times B_v$, where $F(v)$ is the face neighborhood of $v$. Moreover, for any pair of adjacent vertices $v$ and $v'$, either $A_v=A_{v'}$ or $B_v=B_{v'}$. In the former case, there are elements $b\in B_v$ and $b'\in B_{v'}$ such that $\phi^{-1}_v(A_v\times \{b\})=\phi^{-1}_{v'}(A_{v'}\times \{b'\})$. In the latter case there are elements $a\in A_v$ and $a'\in A_{v'}$ such that $\phi^{-1}_v(\{a\}\times B_v)=\phi^{-1}_{v'}(\{a'\}\times B_{v'})$.\par
We say that a local product structure is \textit{strict} if $\phi^{-1}_v(\{a\}\times \{b\})=\phi^{-1}_{v'}(\{a\}\times \{b'\})$ for every $a\in A_v$ in the former case, and if $\phi^{-1}_v(\{a\}\times \{b\})=\phi^{-1}_{v'}(\{a'\}\times \{b\})$ for every $b\in B_v$ in the latter case.
\end{definition}

The local product structure of a square complex is illustrated in Figure~\ref{fig:square complex local product}. The left-right Cayley complex of \cite{Dinur2021} can be given a strict local product structure. This was exploited in the quantum Tanner code construction of \cite{Leverrier2022}.

\paragraph{Choice of local codes.} To fully define $C_X$ and $C_Z$, we must coherently assign local codes $C_{V_X}$ and $C_{V_Z}$ to the vertices of the graphs $\mathcal G_X^\Box$ and $\mathcal G_Z^\Box$, respectively. When $\mathcal S$ has a local product structure, it is natural to choose the local codes to be classical tensor product codes. Here, for simplicity we detail the case where the local product structure is strict. However, we will later go beyond this constraint, which will require our local codes to have certain symmetries.\par

Given two adjacent vertices $v_x$ and $v_z$ in such a square complex, let us assume that $\phi_{v_x}:F(v_x)\to [\ell]\times[\ell']$ and $\phi_{v_z}:F(v_z)\to  [\ell]\times [\ell'']$, where $\ell, \ell'$ and $\ell''$ are all positive integers. We can see these bijections as indexings of face neighborhoods. Suppose also that the intersection of their neighborhoods is of the form $[\ell]\times \{i\}$ according to the indexing of $F(v_x)$, and $[\ell]\times \{j\}$ for that of $F(v_z)$. We can extend $\phi_{v_x}$ by linearity into a map $\phi_{v_x}: \mathbb F_2F(v_x)\to \mathbb F_2^{\ell}\otimes \mathbb F_2^{ \ell'}$ mapping faces to canonical basis vectors. We consider local codewords of the product code $C\otimes C' \subseteq \mathbb F_2^{\ell}\otimes \mathbb F_2^{ \ell'}$, where $C$ is a code of length $\ell$ and $C'$ is a code of length $\ell'$. This code can be represented as the set of binary $\ell\times\ell'$ matrices, all of whose rows and columns are elements of $C$ and $C'$, respectively. By extending $\phi_{v_z}$ by linearity into a map $\phi_{v_z}: \mathbb F_2F(v_z)\to \mathbb F_2^{\ell}\otimes \mathbb F_2^{ \ell''}$, we can also consider local codewords of the product code $C^\perp\otimes C''^\perp$ defined in $F(v_z)$, where $C$ is as previously, and $C''$ is a code of length $\ell''$. Codewords can be similarly represented in a $\ell\times \ell''$ matrix. When the local indexing of the faces is strict, the intersection $F(v_x)\cap F(v_z)$ is mapped by $\phi_{v_x}$ and $\phi_{v_z}$ to a column in each of the two matrices. This column supports codewords of $C$ in the first, and codewords of $C^\perp$ in the second, as shown in Figure~\ref{Figure:local product check}. Since we chose a code and its dual, they overlap on an even number of positions in $\tilde F:=F(v_x)\cap F(v_z)$. That is, given $c_x\in C\otimes C'$ and $c_z\in C^\perp\otimes C''^\perp$, we denote $\tilde c_x$ and $\tilde c_z$ the restrictions of $\phi^{-1}_{v_x}(c_x)$ and $\phi^{-1}_{v_z}(c_z)$ to $\mathbb F_2 \tilde F$, respectively. By our choice of local codes, we automatically have $\langle \tilde c_x,\tilde c_z\rangle=0$. We can hence define the local codewords $\phi^{-1}_{v_x}(c_x)$ and $\phi^{-1}_{v_z}(c_z)$ as parity-check constraints associated with the code $C_{v_x}$ and $C_{v_z}$, respectively.\par
\begin{figure}[t]
\begin{center}
\begin{tikzpicture}[scale=0.6]

\fill[gray!20] (2,0) rectangle (3,8);

\foreach \x in {0,...,8} {
  \draw[black!30] (\x,0) -- (\x,8);
}
\foreach \y in {0,...,8} {
  \draw[black!30] (0,\y) -- (8,\y);
}

\node at (4,8.5) {\textbf{X-checks of $v_x$}};
\node[rotate=0] at (-0.7,4) {\textbf{$C$}};
\node at (4,-0.7) {\textbf{$C'$}};

\foreach \i/\val in {0/1, 1/0, 2/0, 3/0, 4/0, 5/1, 6/1, 7/1} {
  \node at (2.5,7.5-\i) {\val};
}

\begin{scope}[shift={(10,0)}]
  \fill[gray!20] (3,0) rectangle (4,8);

  \foreach \x in {0,...,8} {
    \draw[black!30] (\x,0) -- (\x,8);
  }
  \foreach \y in {0,...,8} {
    \draw[black!30] (0,\y) -- (8,\y);
  }

  \node at (4,8.5) {\textbf{Z-checks of $v_z$}};
  \node[rotate=0] at (-0.7,4) {\textbf{$C^\perp$}};
  \node at (4,-0.7) {\textbf{$C''^\perp$}};

  \foreach \i/\val in {0/0, 1/1, 2/0, 3/0, 4/1, 5/0, 6/1, 7/1} {
    \node at (3.5,7.5-\i) {\val};
  }
\end{scope}

\end{tikzpicture}
\end{center}
\caption{Face neighborhood $F(v_x)$ and $F(v_z)$, identified with two product sets and represented as matrices. The gray column in the left array and in the right array corresponds to identical faces in the square complex. A $X$-check can be seen as a codeword of the product code $C\otimes C'$ in the left array, and a $Z$-check as a codeword of the product code $C^\perp \otimes C''^\perp$ in the right array. The restriction of the two codewords to the gray columns are orthogonal vectors, ensuring that the $X$-check and the $Z$-check commute.}
\label{Figure:local product check}
\end{figure}

When making such a consistent choice of local product codes across the entire square complex, we obtain two Tanner codes 
\[ C_X=T(\mathcal G_X^\Box, C_{V_X} ), \quad C_Z=T(\mathcal G_Z^\Box, C_{V_Z}),\]
such that $C_X^\perp\subseteq C_Z$. Such a pair of classical codes defines a quantum Tanner code denoted $T(\mathcal S, C_{V_X}, C_{V_Z})$.\par

The procedure above is better carried out case by case, especially when the local product structure is not strict. In that case, the assignment of the local codes depends on the global structure of the square complex. We illustrate it on explicit examples in Section~\ref{section:New constructions of quantum Tanner codes}.\par

\subsection{Lifting quantum Tanner codes}\label{section lift of quantum Tanner codes}

We now describe a lifting procedure for quantum Tanner codes which differs from that of \cite{GuemardLiftIEEE}. Our approach is an extension of the method applied in \cite{Panteleev2021} to design a family of Sipser-Spielman codes \cite{Sipser1996} with optimal parameters. Here, we use a covering of a bipartite square complex in order to create a covering for each of its two diagonal graphs.\par

Suppose that $\operatorname{CSS}(C_X, C_Z)$ is a quantum Tanner code, built over a bipartite square complex $\mathcal S=(V=V_X\sqcup V_Z,E,F)$, with $C_X=T(\mathcal G_X^\Box, C_{V_X} )$ and $C_Z=T(\mathcal G_Z^\Box, C_{V_Z})$. Given a covering map $p:\tilde{\mathcal S} \to \mathcal S$, the 1-skeleton of $\tilde{\mathcal S}=(\tilde V,\tilde E,\tilde F)$ is also bipartite with bipartition $\tilde V=\tilde V_X\sqcup \tilde V_Z$, where $\tilde V_X=p^{-1}(V_X)$ and $\tilde V_Z=p^{-1}(V_Z)$. We can hence consider the diagonal graphs $\tilde{\mathcal G}_X^\Box=(\tilde V_X,\tilde F)$ and $\tilde{\mathcal G}_Z^\Box=(\tilde V_Z,\tilde F)$. \par

Using the notation above, we have the following lemma.
\begin{lemma}\label{lemma restriction covering map}
The covering map $p$ induces two covering maps $p_X: \tilde{\mathcal G}_X^\Box \to \mathcal G_X^\Box $ and $p_Z:\tilde{\mathcal G}_Z^\Box \to \mathcal G_Z^\Box$.
\end{lemma}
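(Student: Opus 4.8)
The plan is to check directly that the combinatorial data defining $\tilde{\mathcal G}_X^\Box$ is exactly a graph covering of $\mathcal G_X^\Box$ in the sense of Section~\ref{section covering maps}, using the explicit permutation-voltage description of $\tilde{\mathcal S}$. Recall that $\mathcal G_X^\Box=(V_X,F)$ has edge set $F$: a face $f$ described by the circuit $[v_1,v_2].[v_2,v_3].[v_3,v_4].[v_4,v_1]$ with, say, $v_1,v_3\in V_X$, is viewed as the edge $\{v_1,v_3\}$. First I would fix, once and for all, the permutations $\pi_e\in S_r$ attached to the oriented edges of the $1$-skeleton of $\mathcal S$ that realise $p$. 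Then for each face $f\in F$ I orient its boundary circuit so that it starts at the $V_X$-corner $v_1$, and I \emph{define} a permutation $\sigma_f\in S_r$ by $\sigma_f:=\pi_{e_2}\circ\pi_{e_1}$, i.e. the composite voltage from $v_1$ to $v_3$ along the first two edges of $f$. The candidate covering map $p_X:\tilde{\mathcal G}_X^\Box\to\mathcal G_X^\Box$ is the restriction of $p$ to $\tilde V_X=p^{-1}(V_X)$ and $\tilde F$.

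The key step is to verify that, with the vertex labelling inherited from $\tilde{\mathcal S}$, the lifted face $(f,i)$ joins $(v_1,i)$ to $(v_3,\sigma_f(i))$ in $\tilde{\mathcal G}_X^\Box$, and that this matches the permutation-voltage construction of Section~\ref{section covering maps} applied to $(\mathcal G_X^\Box,(\sigma_f)_{f\in F})$. This is immediate from the displayed $4$-cycle describing $(f,i)$ in the excerpt: its two $V_X$-corners are exactly $(v_1,i)$ and $(v_3,\pi_{e_2}\circ\pi_{e_1}(i))$. One must also check the consistency constraint $\sigma_{f^{-1}}=\sigma_f^{-1}$ required of voltages: reversing the orientation of $f$'s boundary (equivalently, starting the circuit at $v_3$ instead of $v_1$) replaces $(e_1,e_2,e_3,e_4)$ by $(e_3^{-1},e_2^{-1},e_1^{-1},e_4^{-1})$ up to cyclic reordering, and since $\pi_{e^{-1}}=\pi_e^{-1}$ this sends $\sigma_f$ to $\pi_{e_2^{-1}}\circ\pi_{e_3^{-1}}$; using that the full boundary voltage $\pi_{e_4}\pi_{e_3}\pi_{e_2}\pi_{e_1}$ fixes the fibre over $v_1$ (a face lifts to $r$ \emph{closed} $4$-cycles, by definition of a covering of a square-complex), one gets $\pi_{e_3}\pi_{e_4}=(\pi_{e_2}\pi_{e_1})^{-1}=\sigma_f^{-1}$ on the relevant fibre, which is the required relation. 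Then $\tilde{\mathcal G}_X^\Box$ \emph{is} the index-$r$ covering of $\mathcal G_X^\Box$ with these voltages, so $p_X$ is a covering map; the same argument verbatim, swapping the roles of $V_X$ and $V_Z$ (and using $v_2,v_4$ as the distinguished corners), gives $p_Z$.

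For the Galois statement, suppose $p$ is Galois, so $\operatorname{deck}(p)$ acts freely and transitively on each fibre of $p$. Any $\psi\in\operatorname{deck}(p)$ is an automorphism of $\tilde{\mathcal S}$ commuting with $p$; in particular it permutes $\tilde V_X=p^{-1}(V_X)$ among themselves and permutes $\tilde F$, hence restricts to a graph automorphism $\psi_X$ of $\tilde{\mathcal G}_X^\Box$ with $p_X\circ\psi_X=p_X$, i.e. $\psi_X\in\operatorname{deck}(p_X)$. The assignment $\psi\mapsto\psi_X$ is a group homomorphism $\operatorname{deck}(p)\to\operatorname{deck}(p_X)$; it is injective because a deck transformation of $\tilde{\mathcal S}$ is determined by its action on a single fibre (say over a vertex of $V_X$), by connectedness of $\tilde{\mathcal S}$ and unique lifting of paths. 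Since $\operatorname{deck}(p)$ already acts transitively on the fibre $p_X^{-1}(v)=p^{-1}(v)$ for $v\in V_X$, so does its image, and freeness is inherited from freeness on that same fibre; hence $p_X$ is Galois. Symmetrically for $p_Z$.

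The main obstacle, and the one point deserving care, is precisely the voltage-consistency relation $\sigma_{f^{-1}}=\sigma_f^{-1}$: it is what forces one to use the defining property of a \emph{covering of a square-complex} — that boundary circuits of faces lift to closed loops — rather than just of the underlying graph, and it is the place where the argument would fail for an arbitrary graph covering that breaks faces. Everything else is bookkeeping with the explicit labels given in the excerpt.
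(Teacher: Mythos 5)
Your proof is correct and follows the same voltage-assignment approach as the paper's own proof: assign to the diagonal edge $\{v_1,v_3\}$ of $\mathcal G_X^\Box$ the composite permutation $\pi_{e_2}\circ\pi_{e_1}$, observe that this reproduces exactly the incidences of $\tilde{\mathcal G}_X^\Box$, and restrict deck transformations. Your treatment of the Galois part is a welcome expansion of the paper's terse assertion that the action is ``inherited.'' One small caveat on your consistency check: starting the reversed boundary at $v_3$ gives the edge sequence $(e_2^{-1},e_1^{-1},e_4^{-1},e_3^{-1})$, not $(e_3^{-1},\dots)$, and the first two edges yield voltage $\pi_{e_1^{-1}}\circ\pi_{e_2^{-1}}=\sigma_f^{-1}$ directly from $\pi_{e^{-1}}=\pi_e^{-1}$; the face-closure property of a square-complex covering is rather what guarantees that the alternative path $e_3.e_4$ from $v_3$ to $v_1$ gives the same permutation, i.e. that $(f,i)$ is genuinely a $4$-cycle (and hence an edge of $\tilde{\mathcal G}_X^\Box$) rather than an open path. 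This does not affect the conclusion.
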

\begin{proof}
Suppose that $p:\tilde{\mathcal S}\rightarrow  \mathcal S$ is an index-$t$ covering defined by an assignment of permutations to all the oriented edges of $\mathcal S$, i.e $\pi_e\in S_t$ for $t>0$ such that $\pi_{e^{-1}}=\pi_{e}^{-1}$. It is essential to note that $\tilde V_X=p^{-1}(V_X)$, $\tilde V_Z=p^{-1}(V_Z)$, and $\tilde F =p^{-1}(F)$, so that we have $t$ vertices or faces in $ \tilde{\mathcal G}_X^\Box$ projected by $p$ onto each vertex or face in $ \mathcal G_X^\Box$, and similarly for the diagonal graph on $V_Z$. We now need to show that these cells are assembled into coverings of the diagonal graphs.\par
 Given a face $f\in F$ described by the closed path of oriented edges $e_1.e_2.e_3.e_4=[v_1,v_2].[v_2,v_3].[v_3,v_4].[v_4,v_1]$, then for all $i\in [r]$, the face labeled $(f,i)$ in $\tilde{\mathcal S}$ can be defined by the 4-cycle as described in Section~\ref{section covering maps}. Therefore, if $e=[v_1,v_3]$ is an edge in $\mathcal G_X^\Box$, then an edge $(e,i)$ in the covering $\tilde{\mathcal G}_X^\Box$, projected onto $e$ by $p$, is given by $[(v_1,i), (v_3,\pi_{e_2}\circ \pi_{e_1}(i))]$. Let us assign the permutation $\pi_{e_2}\circ \pi_{e_1}\in S_t$ to $e$. We can similarly assign a composition of two permutations to each diagonal edge of $\mathcal G_X^\Box$, according to the corresponding square in $\mathcal S$. This defines a covering map $p_X: \tilde{\mathcal G}_X^\Box \to \mathcal G_X^\Box $. A similar procedure on $\tilde{\mathcal G}_Z^\Box$ defines a covering map $p_Z:\tilde{\mathcal G}_Z^\Box \to \mathcal G_Z^\Box$.
\end{proof}

We now have all the elements to define the lift of a quantum Tanner code. Notice that we can define two lifted classical Tanner codes from our square complex covering. Let $\tilde C_X=T(\tilde{\mathcal G}_X^\Box ,C_{\tilde V_X})$ be the lift of $C_X=T(\mathcal G_X^\Box, C_{V_X} )$ associated to $p_X$, and $\tilde C_Z=T(\tilde{\mathcal G}_Z^\Box ,C_{\tilde V_Z})$ be the lift of $C_Z=T(\mathcal G_Z^\Box, C_{V_Z})$ associated to $p_Z$. The codes $\tilde C_X$ and $\tilde C_Z$ are subspaces of a common vector space $\mathbb F_2^{t\cdot n}$, where $t$ is the index of the cover, and they are candidates for defining a new CSS code. This is conditioned on the fact that $\tilde C_X$ and $\tilde C_Z$ must satisfy the orthogonality condition, namely $\tilde C_X^\perp\subseteq \tilde C_Z$. If that condition is satisfied, we define the lifted quantum Tanner code associated to $p$ as the code $\operatorname{CSS}(\tilde C_X,\tilde C_Z)$.\par

This definition suggests that the lifted classical Tanner codes associated to $p_X$ and $p_Z$ do not necessarily satisfy the orthogonality condition. This depends on certain properties of the complex $\mathcal S$, and on the choice of local codes. However, there are cases where they always do, as shown below.

\begin{proposition}\label{proposition quantum Tanner lifting simply connected closure}
Let $\operatorname{CSS}(C_X,C_Z)$ be a quantum Tanner code over a bipartite square complex $\mathcal{S}$ for which the face neighborhood of every vertex has a simply connected closure\footnote{The closure of a set of faces is the subcomplex composed of those faces along with the set of vertices and edges incident to them.}, when seen as a cell complex. Then, the lifting procedure above, applied to every finite covering space $p:\tilde{\mathcal{S}}\to \mathcal{S}$, defines a valid CSS code.
\end{proposition}

\begin{proof}
    The face neighborhood of every vertex having a simply connected closure, the support intersection between a check of $\tilde C_X$ and one of $\tilde C_Z$, associated to two incident vertices of $\tilde{\mathcal S}$, is isomorphic to the support intersection of their projections in $\mathcal S$ by the covering map.
\end{proof}
In Figure~\ref{fig: closure face neighborhood}, we show an example of a vertex in a square complex whose face neighborhood has a closure that is not simply connected.
\begin{figure}[t]
    \centering
    \includegraphics[width=0.28\linewidth]{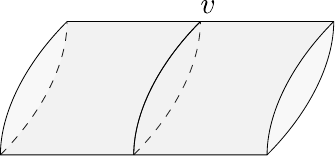}
    \caption{Face neighborhood of a vertex $v$ in the topological realization of a square complex. It is composed of four faces and its closure forms a subcomplex that is not simply connected: a cylinder.}
    \label{fig: closure face neighborhood}
\end{figure}
The condition of Proposition~\ref{proposition quantum Tanner lifting simply connected closure} is satisfied for a wide range of quantum Tanner codes, including those built from square complexes which are products of two graphs, or left-right Cayley complexes satisfying the total-no-conjugacy-class condition \cite{Dinur2021}. It was shown in \cite{GuemardLiftIEEE}, that the quadripartite version of the left-right Cayley-complex of \cite{Dinur2021} can be obtained as a covering of a Cartesian product of graphs. This can be used to show that the quantum Tanner code family defined in \cite{Leverrier2022,leverrier2022decoding} on such left-right Cayley-complex can be obtained by quantum Tanner code lifting, similarly to the approach of \cite[Section IV.A]{GuemardLiftIEEE}.\par

Hereafter, we define quantum Tanner code families beyond the restriction on $\mathcal S$ stated in Proposition~\ref{proposition quantum Tanner lifting simply connected closure}. It means that the number of qubits shared by a pair of $X$- and $Z$-checks in the base codes is not necessarily preserved under lifting. In particular, this shows that the lifting technique presented here is not equivalent to that of \cite{GuemardLiftIEEE}, where lifting a code preserved the size of the overlap between any pairs of $X$- and $Z$-checks. In Section \ref{section:New constructions of quantum Tanner codes}, we give explicit examples where this phenomenon occurs. \\

In the rest of this article, we focus on specific types of square complex coverings.

\begin{lemma}
Let $\mathcal S$ be a connected square complex for which every edge is incident to at least two faces, and let  $p:\tilde{\mathcal S}\to \mathcal S$ be covering.
\begin{enumerate}
    \item If $p$ is connected, then $p_X$ and $p_Z$ are also connected coverings.
    \item If $p$ is a Galois covering with group of deck transformations $\Gamma$, then $p_X$ and $p_Z $ are also Galois with group of deck transformations $\Gamma$.
\end{enumerate}     
\end{lemma}

\begin{proof}
\begin{enumerate}
    \item If every edge in $\mathcal S$ is incident to at least two faces, then its diagonal graphs are also connected. Indeed, if two faces are adjacent in $\mathcal S$, then their induced edges in either of the two diagonal graphs must also be adjacent, and by repeating this argument for any two distant faces in $\mathcal S$ shows that any two edges in the diagonal graphs are connected by a path. This local property is preserved under taking a covering, showing that the diagonal graphs of $\tilde{\mathcal{S}}$ are also connected. 
    \item If $p:\tilde{\mathcal S}\to \mathcal S$ is a Galois covering, the diagonal graphs in $\tilde{\mathcal{S}}$ must be connected as a consequence of the first item. Moreover, $\tilde{\mathcal G}_X^\Box$ inherits the free transitive action of the group $\operatorname{deck}(p)$, since we can see $\tilde{\mathcal G}_X^\Box$ as a subcomplex of $\tilde{\mathcal S}$ and $p_X$ as a restriction of $p$ to this subcomplex. Therefore, $p_X: \tilde{\mathcal G}_X^\Box\to \mathcal G_X^\Box$ is also a Galois covering with a group of deck transformations $\operatorname{deck}(p)$, and similarly for $p_Z$.
\end{enumerate}
\end{proof}
In this work, we say that the lift of a CSS code is \textit{Galois} if it is associated to a Galois covering.\par

An automorphism of a quantum code $C=\operatorname{CSS}(C_X,C_Z)$ is defined as a permutation of the coordinates of the qubits which is equivalent to a permutation of the parity checks of both $H_X$ and $H_Z$. The set of automorphisms of $C$ forms the group $\operatorname{Aut}(C)$. In the case of a lifted quantum Tanner code $\tilde{C} $, notice that the group of deck transformations of $p:\tilde{\mathcal S}\to \mathcal  S$ is a symmetry group of the Tanner graph of the quantum code. It is also a symmetry group of each of the Tanner graphs associated to $\tilde H_X$ and $\tilde H_Z$. Thus, given an element\footnote{The action of $\gamma$ is naturally extended to the $\mathbb F_2$-space of the lifted codes.} $\gamma \in \operatorname{deck}(p)$, a vector $c\in \mathbb F_2 \tilde F$ with syndrome $\tilde H_Xc=s_X$ and $\tilde H_Zc=s_Z$ , there exist two permutations $\sigma_X$ and $\sigma_Z$ of the syndrome coordinates such that $\tilde H_X\gamma(c)=\sigma_X(s_X)$ and $\tilde H_Z\gamma(c)=\sigma_Z(s_Z)$. Therefore, $\operatorname{deck}(p)$ is a subgroup of $\operatorname{Aut}(\tilde{C})$. We will use this feature in Section~\ref{section transfer homomorphism} to characterize the logical operators of certain lifted codes.

\subsection{Parameter bounds from transfer homomorphism}
\label{section transfer homomorphism}

For a lift defined from a Galois cover of odd index, we can use our information on the base code and leverage covering theory to lower bound the dimension of the lifted code and to characterize the structure of the logical operators. To this end, we use the language of chain complexes, and a specific chain map called the \textit{transfer homomorphism} \cite{HatcherTopo} usually arising in the context of cellular homology, or more generally in group theory. Although the discussion and parameters bounds presented here holds for code lifting in general, such as the method of \cite{GuemardLiftIEEE}, we focus on lifted classical and quantum Tanner codes. Our findings can be straightforwardly applied to the codes of Example~\ref{example BS2}, and show that logical operators are invariant under cyclic permutations. More formally, the aim of this section is to prove the following bounds on the parameters of a lifted code.

\begin{proposition}\label{proposition parameter bound}
    Let $\tilde C$ be a quantum Tanner code obtained by a Galois $t$-lift, with $t$ odd, of a quantum Tanner code $C$ with parameters $[[n,k,d]]$. Then, the parameters $[[\tilde n,\tilde k,\tilde d]]$ of $\tilde C$ satisfy $\tilde n=tn$, $\tilde k\geq k$, and $\tilde d\leq td$. Moreover, if $\tilde k=k$, then $\tilde d\geq d$.
\end{proposition}

\begin{remark}
    The validity of Proposition \ref{proposition parameter bound} extends to the code lifting method of \cite{GuemardLiftIEEE}, as made clear from the technics, which does not depend on the structural specificities of quantum Tanner codes. In particular, the result just relies on the lifted quantum code $\tilde C$ to be a chain complex of free modules over a group $\Gamma$ of order $t$.
\end{remark}

\paragraph{Classical and quantum codes as chain complexes.} Recall that a classical code can be seen as the first homology group of the chain complex $C_1\xrightarrow[]{h} C_0:=\mathbb F_2^n\xrightarrow[]{h} \mathbb F_2^m$. By abuse of language, we directly refer to the chain complex as a classical code. Moreover, given a CSS code $\mathrm{CSS}(C_X,C_Z)$ the orthogonality between $C_X$ and $C_Z$ is equivalent to $H_X H_Z^T=0$. This \textit{orthogonality condition} is analogous to the composition property of two boundary maps in a chain complex. Since a CSS code is fully specified by two parity-check matrices, it can naturally be represented by a \textit{3-term chain complex} (or its dual),
\begin{equation}
C\coloneq\mathbb{F}_2^{m_Z}\xrightarrow{\partial_{2}=h_Z^*}\mathbb{F}_2^{n}\xrightarrow{\partial_{1}=h_X} \mathbb{F}_2^{m_X}.
\end{equation}
Here, $n$ in the number of physical qubits, while $m_X$ and $m_Z$ are the number of rows in $H_X$ and $H_Z$, respectively. Conversely, any CSS code defines a chain-complex of finite dimensional $\mathbb F_2$-vector-spaces, and we obtain a correspondence between these 2 sets of objects. It follows that the homology group $H_1(C)$ is equal to $C_X/C_Z^\perp$ and corresponds to logical $Z$-operators. Similarly, the co-homology group $H^1(C)$ is equal to $C_Z/C_X^\perp$ and corresponds to logical $X$-operators.\par

\paragraph{Classical Tanner codes.} For linear codes, lifting may induce the presence of symmetries in $T(\tilde{\mathcal G},C_{\tilde V})$. An automorphism of a binary linear code is a permutation of the coordinates mapping the code to itself. The set of automorphisms of a code $C$ forms a group denoted $\operatorname{Aut}(C)$. This feature can be conveniently formulated in the language of chain complexes. Lifting a linear code with the method of \cite{GuemardLiftIEEE} yields a length 2-chain complex which is a free module over the group of deck transformations For lifts of classical Tanner codes, we similarly obtain the following.\par

\begin{lemma}\label{lemma classical is free module}
Let $C :=  C_1\xrightarrow[]{h}C_0$ be a classical (Tanner) code and $\tilde C$ a connected Galois lift of $C$ with group of deck transformations $\Gamma$. Then, $\tilde C$ is a chain complex of free $\Gamma$-modules.
\end{lemma}

\begin{proof}
Given $C$ a Tanner code defined from a graph $\mathcal G$, let $p:\tilde{\mathcal G}\to \mathcal G$ be the Galois covering associated to lifted code $\tilde C$. Now, by fixing an arbitrary parity-check matrix $H:=\text{Mat}(h)$, and given a row $r_i$ of $H$, i.e a check localized at a vertex $v$, we may associate $|\Gamma|$ lifted checks of the lifted parity-check $\tilde H$, one denoted $\tilde r_i$ for each vertex $\tilde v\in p^{-1}(v)$. We may refer to this set of elements as the fiber above $r_i$ and notice that $\Gamma$ acts fiberwise, freely and transitively. Moreover, given an edge $e=[v,v']$, and a lift $\tilde e=[\tilde v,\tilde v']$, we have that $\tilde r_i$ acts on the bit associated to $\tilde e$ if and only if $r_i$ acts on the bit associated to $r_i$ From this, it is clear that $\Gamma$ is a symmetry group of the Tanner graph associated to the lifted parity-check matrix $\tilde H$, and $\Gamma$ acts freely and transitively both on the fiber above the bit and check vertices. Therefore, by extending the action of $\Gamma$ by linearity over elements of $C_1$ and $C_0$, these spaces become free $\Gamma$-modules. In addition, given an element $\gamma \in \operatorname{deck}(p)$ and a vector $x\in \mathbb F_2 \tilde E$ with syndrome $\tilde Hx=s$, we have $\tilde H\gamma(x)=\gamma(s)$. In the language of chain complexes, it means that the action of $\Gamma$ on the constituent spaces commutes with boundary maps.
\end{proof}

A lift induces a chain map, i.e. a map commuting with boundary operators, which we denote $\pi_\#: \tilde C_i \to C_i$. This map sends a basis vector $\tilde x\in \tilde C_i$ of the lifted code to its projection in the base code. To this map corresponds a cochain map between the dual chain complexes, $\pi^\#:C^i\to \tilde C^i$.\par

Notice also that there exists a chain map from our base code to the lifted code, denoted $\tau_\#:  C_i \to \tilde C_i$, and sending a basis vector in $C_i$ to the sum of its $t$ distinct lifts in $\tilde C_i$, namely, for $x\in C_i$, 
\[\tau_\#(x)=\sum_{\tilde x\in p^{-1}(x)} \tilde x.\]
The composition $\pi_\#\circ \tau_\#$ is multiplication by $t$, and since we consider $\mathbb F_2$-vector spaces, this map is the zero map when $t$ is even.\par

\paragraph{Quantum Tanner codes.} We now show how to extend these maps to chain maps on a lifted quantum code. While here we focus on quantum Tanner code, it is valid also for the method of \cite{GuemardLiftIEEE}. This approach provides valuable information only when $t$ is odd, and can be applied to codes of Table~\ref{Table:lift BS(l)}. \par
We consider a quantum Tanner code $C=T(\mathcal S, C_{V_X}, C_{V_Z})$, interpreted as a chain complex $C :=  C_{2}\xrightarrow[]{h_Z^*} C_{1}\xrightarrow[]{h_X} C_{0}$ defined by two classical Tanner codes $C_X=T(\mathcal G_X^\Box, C_{V_X} )$ and $C_Z^*=T(\mathcal G_Z^\Box, C_{V_Z})$, seen as the chain complexes $C_X :=   C_{1}\xrightarrow[]{h_X} C_{0}$ and $C_Z :=  C_{2}\xrightarrow[]{h_Z^*} C_{1}$, respectively, with $\mathrm{Mat}(h_X)=H_X$ and $\mathrm{Mat}(h_Z^*)=H_Z^T$. Here, we interpret the cochain complex of the code $C_Z$ as a chain complex. By performing a lift of $C$ corresponding to an index-$t$ covering $p:\tilde{\mathcal S}\to \mathcal S$, we obtain $t$-lifts $\tilde C_X, \tilde C_Z$ for each of the constituent codes, with $\tilde h_X\circ \tilde h^*_Z=0$. From the previous discussion on the structure of the constituent classical codes, we obtain the following.\par

\begin{lemma}
Let $C$ be a quantum (Tanner) code and $\tilde C$ a connected Galois lift of $C$ with group of deck transformations $\Gamma$. Then, $\tilde C$ is a chain complex of free $\Gamma$-modules.
\end{lemma}
\begin{proof}
From Lemma \ref{lemma classical is free module} and Section~\ref{section lift of quantum Tanner codes}, when $p$ is Galois the spaces of $\tilde C$ inherit the free and transitive action (on the fiber) of the group of deck transformations $\Gamma$. Since this action commutes with boundary map of $\tilde C_X$ and $\tilde C_Z$, it also commutes with the boundary maps of $\tilde C$.
\end{proof}

We also obtain two chain maps, 
\[
\begin{tikzcd}[column sep=large]
\tilde C_{2} \arrow[r, "\tilde h_Z^*" ] \arrow[d, "\pi_\#"] & \tilde C_1 \arrow[d, "\pi_\#"] & \quad & 
\tilde  C_{1} \arrow[r, "\tilde h_X"] \arrow[d, "\pi_\#"] & \tilde  C_0 \arrow[d, "\pi_\#"] \\
C_2 \arrow[r, "h_Z^*"] & C_1 & \quad & 
C_1 \arrow[r, "h_X"] & C_0
\end{tikzcd}.
\]
Notice that these maps have the same action on the space $\tilde C_1$. This also holds for the associated cochain map denoted $\pi^\#$. Similarly, from code lifting, we obtain two chain maps in the other direction, $\tau_\#: C_X\to \tilde C_X$ and $\tau_\#: C_Z\to \tilde C_Z$, having the same action on the space $C_1$. This also holds for the associated cochain map $\tau^\#$. Therefore, we obtain a chain map $\pi_\#$ from the lifted CSS code to the base CSS code, and a chain map $\tau_\#$ in the other direction,
\[
\begin{tikzcd}
\tilde C_2 \arrow[r, "\tilde h_Z^*"] \arrow[d, "\pi_\#"] & \tilde C_1 \arrow[r, "\tilde h_X"] \arrow[d, "\pi_\#"] & \tilde C_0 \arrow[d, "\pi_\#"] &\quad & \tilde C_2 \arrow[r, "\tilde h_Z^*"] & \tilde C_1 \arrow[r, "\tilde h_X"]  & \tilde C_0 \\
C_2 \arrow[r, "h_Z^*"] & C_1\arrow[r, "h_X"] & C_0 &\quad & C_2\arrow[u, "\tau_\#"] \arrow[r, "h_Z^*"] & C_1\arrow[u, "\tau_\#"] \arrow[r, "h_X"] & C_0\arrow[u, "\tau_\#"].
\end{tikzcd}
\]
Furthermore, the composition of these two chain maps $\pi_\#\circ \tau_\#$ remains multiplication by~$t$.\par
These maps and cochain maps induce homomorphisms at the level of homology and cohomology, which we denote
\begin{align*}
    &\pi_*:H_i(\tilde C)\to H_i( C), \qquad \tau_*:H_i( C)\to H_i( \tilde C),\\
    &\pi^*:H^i( C)\to H^i( \tilde C), \qquad \tau^*:H^i( \tilde C)\to H^i(  C),
\end{align*}
where all homology groups are understood to be with $\mathbb F_2$-coefficients. In the context of coverings of cell complexes, $\tau_*$ and $\tau^*$ are referred to as \textit{transfer homomorphisms}. For an element $\gamma \in \Gamma$, we also denote by $\gamma^*$ the induced action of $\gamma$ on homology.\par
By studying the compositions $\pi^*\circ \tau^*$ and $\tau^*\circ \pi^*$, we may obtain information on the dimension of $C$ and on the structure of its logical operators (homology classes). The following proposition is adapted from \cite[Proposition 3G.1]{HatcherTopo}, where it is stated for homology of cell complexes.

\begin{proposition}\label{proposition transfer homomorphism}
 Let $p:\tilde{\mathcal S}\to \mathcal S$ be a Galois covering of odd index $t$, with group of deck transformations denoted $\Gamma$. Then the map $\pi^*: H^i(C)\to  H^i(\tilde C)$ is injective. Its image is the subgroup $H^i(\tilde C)^\Gamma$ consisting of classes $\alpha$, such that for $\gamma\in \Gamma$, $\gamma^*(\alpha)=\alpha$.
\end{proposition}

\begin{proof}
We have all the ingredients to study transfer homomorphisms: a covering $p$ inducing a free transitive (on the fiber) linear action of the group $\Gamma\cong \mathrm{deck}(p)$ on the chain complex $C$ associated with our code, and chain maps $\pi_\#$ and $\tau_\#$ with composition equal to multiplication by $t$. We may therefore revisit the proof of \cite[Proposition 3G.1]{HatcherTopo} and verify that no additional conditions are needed.\par

To study the kernel of $\pi^*$, recall that the composition $\pi_\#\circ \tau_\#$, and hence also $\tau^\#\circ\pi^\#$, acts on a chain by multiplication by $t$. Therefore, $\tau^*\circ\pi^*:H^i(C)\to H^i(C)$ acts on a class by sending $\alpha\mapsto t\alpha$. Supposing $\pi^*(\alpha)=0$, we have $\tau^*\circ \pi^*(\alpha)=t\alpha=0$, and since we assume that $t$ is odd, we must have $\alpha=0$. This shows that $\ker(\pi^*)=0$ and that $\pi^*$ is injective.\par

To describe the image of $\pi^*$, we study the composition $\tau_\#\circ \pi_\#$, which sends a basis vector $\tilde x$ in the lifted code to the sum of all basis vectors in its orbit, namely $\tau_\#\circ \pi_\#(x)=\sum_{\gamma\in \Gamma} \gamma(x)$. The induced map on cohomology acts on a class $\alpha\in H^i(\tilde C)$ as $\pi^*\circ \tau^*(\alpha)=\sum_{\gamma\in \Gamma} \gamma(\alpha)$. To prove that $H^i(\tilde C)^\Gamma\subseteq \mathrm{Im}(\pi^*)$, suppose $\alpha  \in H^i(\tilde C)^\Gamma$, namely that $\gamma^*(\alpha)=\alpha$. Then, $\pi^*\circ \tau^*(\alpha)=t\alpha=\alpha$ since $t$ is odd, and hence $\alpha\in \mathrm {Im}(\pi^*)$. To prove that $\mathrm{Im}(\pi^*)\subseteq H^i(\tilde C)^\Gamma$, notice that $\pi_\#\circ \gamma= \pi_\#$ for all $\gamma\in \Gamma$. Therefore, for any class $\alpha \in H^i(C)$, we have that $\gamma^*\circ\pi^*(\alpha)=\pi^*(\alpha)$, and hence the class $\pi^*(\alpha)$ is invariant under the action of $\Gamma$. 
\end{proof}
We can repeat the same work to prove that the map $\tau_*:H_i(C)\to H_i(\tilde C)$ is injective, with image equal to the subgroup of $H_i(\tilde C)$ consisting of homology classes invariant under the action of $\Gamma$.\par
This proposition directly gives the bounds on the dimension and distance of lifted quantum code, as stated in Proposition~\ref{proposition parameter bound}.

\begin{proof}[Proof of Proposition~\ref{proposition parameter bound}]
The lower bound on the dimension of $\tilde C$ follows directly from the injectivity of $\pi_*$ (and $\tau_*$). For the distance, consider a class $\alpha\in H_i(C)$ represented by a vector $x$ of Hamming weight $d$. Then, $\tau_\#(x)$ is a vector of Hamming weight $td$, and this represents a non-trivial homology class $\tilde \alpha=\tau_*(\alpha)$ in $H_i(\tilde C)$. Therefore, we conclude that the distance of the code $\tilde C$ is upper bounded by $td$. A similar upper bound is obtained by considering the image of a cohomology class under $\pi^*$. Finally, when $\tilde k=k$, the maps $\tau_*$ and $\pi^*$ become isomorphism, which means that a non-trivial (co)homology class $\tilde \alpha$ of $\tilde C$, represented by a vector $\tilde x$ of weight $w$, must be mapped onto a representative $\tilde x$ of a non-trivial (co)homology class $\alpha$ of $C$ under $\pi_\#$ or $\tau^\#$. By covering property, this mapping must decrease the weight, and therefore $\tilde d\geq d$.\par  
\end{proof}

\section{New constructions of quantum Tanner codes}\label{section:New constructions of quantum Tanner codes}

\subsection{General procedure}\label{section:general procedure}

In this section and the next one, we introduce new codes constructed by lifting short quantum Tanner codes. As described previously, to define a code, we need a square complex and a set of local codes associated with its vertices. The possibility of lifting a code is governed by the topology of its square complex, as stated in Theorem~\ref{Theorem Galois correspond}. For the purpose of generating families of lifted codes, we seek a square complex having a fundamental group with infinitely many finite-index subgroups. Our strategy is to select a group $G$ and construct a bipartite square complex $\mathcal S$ whose fundamental group is $G$, and has a local product structure. In Section~\ref{section:Quantum Tanner code with cyclic local code} and Section~\ref{section:Quantum Tanner code with double-circulant local code}, we obtain $\mathcal S$ by cellulating the presentation complex of a finitely-presented group, or a homotopy equivalent space. \par

\paragraph{Presentation complex.} For completeness, we recall the construction of a presentation complex, as exposed in \cite{HatcherTopo}. Let $G=\langle S |  R\rangle$ be a group presentation, where $S$ is the set of generators denoted $g_i$ and $R$ the set of relations denoted $r_j$. The group $G$ is the quotient of a free group $F_{|S|}$ on the generators of $S$ by the normal closure of the group generated by $R$. The relations of $R$ are the generators of the kernel of the map $F_{|S|}\to G$. For any group presentation, we can utilize this quotient structure to construct a two-dimensional cell complex $M_G$, called the \textit{presentation complex} of $G$, having one 0-cell, exactly $|S|$ 1-cells and $|R|$ 2-cells, and such that $\pi_1(M_G)=G$. To construct it, we start from its 1-skeleton: a wedge of circles $\vee_i S_i^1$ attached to a vertex $v$, namely directed edges denoted $e_{g_i}$ with both endpoints attached to $v$. This gives a space with fundamental group $F_{|S|}$. A relation is a product of the generators and it specifies a closed path in the graph. For each relation $r_j$, we attach one 2-cell labeled $f_j$ along the closed path specified by $r_j$. For example, if $r_j=g_i g_j\dots g_k$, then we attach a face along the closed path $e_{g_i}\cdot e_{g_j}\dots e_{g_k}$. The effect is to trivialize the element of $F_{|S|}$ corresponding to~$r_j$. \par 

\paragraph{Construction of square complexes.} We will focus on group presentations with two generators and one relation. As such, all of their associated presentation complexes have two edges and one face.  It is, however, important to note that these 2D spaces are not square complexes in general. Nevertheless, given such a group, by subdividing the face of its presentation complex, or a homotopy equivalent space, into squares, we will be able to obtain a bipartite square complex $\mathcal S=(V=V_X\sqcup V_Z,E,F)$ that has locally the structure of a product space.\par

\paragraph{Base quantum Tanner code.} From the diagonal graphs of $\mathcal S$, and a choice of local linear codes, it is possible to generate two Tanner codes $C_X=T(\mathcal G_X^\Box, C_{V_X} )$ and $C_Z=T(\mathcal G_Z^\Box, C_{V_Z})$, satisfying $C_X^\perp\subseteq C_Z$. Their local codes must be chosen to have a symmetry group adapted to the periodic boundary conditions of $\mathcal S$. These will be duals of tensor products of cyclic or double-circulant codes. \par

\paragraph{Lifted codes.} Finally, we construct new square complexes by considering connected Galois covering spaces, such as $p:\tilde{\mathcal S}\to \mathcal S$. By restricting $p$ to $\mathcal G_X^\Box$ and $\mathcal G_Z^\Box$, we define the covering maps $p_X: \tilde{\mathcal G}_X^\Box \to \mathcal G_X^\Box $ and $p_Z:\tilde{\mathcal G}_Z^\Box \to \mathcal G_Z^\Box$, which are also connected and Galois, as discussed in Section~\ref{section lift of quantum Tanner codes}. We then use the procedure of Section~\ref{section lift of quantum Tanner codes} to lift $\operatorname{CSS}(C_X,C_Z)$ into $\operatorname{CSS}(\tilde C_X,\tilde C_Z)$. Recall that for each lift, the group of deck transformations is a subgroup of the automorphism group of the corresponding code.\par

\paragraph{Numerical method.} To compute subgroups and quotient groups of $G$, needed for the Galois covering-space constructions, we use the GAP package LINS \cite{GAP4}.  As we are interested in short codes, we only compute normal finite-index subgroups up to index 30. To compute an upper bound on the distances $d_X $, $d_Z$ of $\operatorname{CSS}(\tilde C_X,\tilde C_Z)$, we use the GAP package QDistRnd \cite{Pryadko2022}.

\subsection{ Quantum Tanner code with cyclic local code}\label{section:Quantum Tanner code with cyclic local code}

\begin{figure*}[t]
  \centering
  \includegraphics[scale=1]{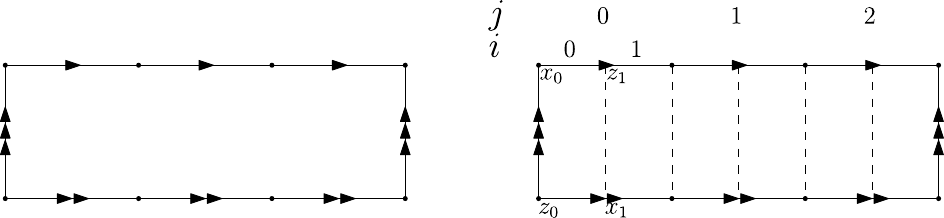}
  \caption{Left: space homotopy equivalent to the presentation complex of $\operatorname{L}(3)$. Right: its subdivision into a square complex and the indexing of the faces by $(i,j)\in [2]\times [\ell]$.}
  \label{fig:Quantum_Tanner_L}
\end{figure*}

Our first example of quantum Tanner code is obtained by considering a space that is homotopy equivalent to the presentation complex of the group defined by \[\operatorname{L}(\ell) :=  \langle a,b | a^\ell b^{-\ell}\rangle,\]
combined with a set of local tensor-product codes of length $2\ell$, and their dual. \par

\paragraph{Square complex.} A space with $\mathrm{L}(\ell)$ for fundamental group can be constructed on the model of Figure~\ref{fig:Quantum_Tanner_L}, corresponding to the case $\ell=3$. For higher values of $\ell$, we have $\ell$ identifications on the top and $\ell$ on the bottom. We build a square complex by subdividing each horizontal edge into two edges. We also connect the new vertices by additional vertical edges, which may be multi-edges, hence subdividing the unique face. This forms the bipartite square complex $\mathcal S_\ell=(V=V_X\sqcup V_Z,E,F)$ which has $|V|=4$, $|E|=2\ell+4$, and $|F|=2\ell$. The subdivision for $\ell=3$ is also shown in Figure~\ref{fig:Quantum_Tanner_L}, and the two diagonal graphs $\mathcal G_X^\Box,\:\mathcal G_Z^\Box $ are depicted in Figure~\ref{fig:diagonal graph L(l)}. Notice that the 1-skeleton of this complex is a bipartite multi-graph, which is adequate to define a quantum Tanner code.\par
\begin{figure}[t]
    \centering
    \includegraphics[width=0.5\linewidth]{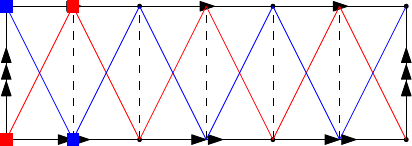}
    \caption{Diagonal graphs of the complex $\mathcal S_\ell$ corresponding to the group $\mathrm L(\ell)$ for $\ell=3$.}
    \label{fig:diagonal graph L(l)}
\end{figure}

We first identify the faces of this complex by the product set $[2]\times [\ell]$, where the index $(i,j)$ corresponds to the face in the $(i+1)\cdot (j+1)$-th position, when counting from left to right in Figure~\ref{fig:Quantum_Tanner_L}. We can see that each vertex is incident to all the $2\ell$ faces. The associated $2\ell$ edges of the diagonal graphs $\mathcal G_X^\Box $ and $\mathcal G_Z^\Box$ inherit the indexing of the faces of $\mathcal S_\ell$.\par

\paragraph{Local codes.} To construct local codes, we first consider $\mathscr{C}_2(1+X)$, the repetition code of length $2$, and $\mathscr{C}_\ell(g)$, the cyclic code of length $\ell$ generated by $g$. We define the tensor code $ \Pi_X, \Pi_Z\subseteq \mathbb F_2^{2}\otimes\mathbb F_2^{\ell}$, where 
\begin{align*}
    \Pi_X&=\mathscr{C}_2(1+X)\otimes\mathscr{C}_\ell(g), \\
    \Pi_Z&=\mathscr{C}_2(1+X)\otimes\mathscr{C}_\ell(g)^\perp,
\end{align*}
and $\mathbb F_2^{2}\otimes\mathbb F_2^{\ell}$ is endowed with the canonical product basis indexed by $[2]\times [\ell]$, as described in Section~\ref{section:local codes}. We will subsequently justify this particular choice of product codes.\par

\paragraph{Definition of $C_X$.} We start by defining a classical Tanner code associated to $\mathcal G_X^\Box$. The graph $\mathcal G_X^\Box=(V_X,F)$ has only two vertices, $V_X=\{x_{0},x_{1}\}$, as shown in Figure~\ref{fig:Quantum_Tanner_L}. The edge-neighborhood of these vertices, in the graph, is equal to their face-neighborhood in the complex $\mathcal S$, and $F(x_{0})=F(x_{1})=F$.\par
We notice a natural local product structure relative to the vertices of $V_X$. These are defined by the bijections
\begin{equation*}
   \phi_{x_0}:\begin{cases}
   \hfill F\hfill&\to [2]\times [\ell]\\
    (i,j)&\mapsto (i,i+j \operatorname{mod} \ell),\\
    \end{cases} \quad \text{and} \quad 
  \phi_{x_1}:\begin{cases}
      \hfill F\hfill&\to [2]\times [\ell]\\
       (i,j)&\mapsto (i,j).
   \end{cases} 
\end{equation*}
Each local code associated to these vertices is set to be isomorphic to the dual of the tensor code $C_{x_t}\cong \Pi_X^\perp$, $t=0,1$. For $t=0$, the isomorphism is induced by extending $\phi_{x_0}$ by linearity into $\phi_{x_0}:\mathbb F_2 F \to \mathbb F_2^{2}\otimes\mathbb F_2^{\ell}$ sending faces to the canonical basis vectors, where our choices of indexing for basis elements in the case of product codes and cyclic codes are described in Section~\ref{section:local codes}. Similarly, we obtain the bijection $\phi_{x_1}:\mathbb F_2 F \to \mathbb F_2^{2}\otimes\mathbb F_2^{\ell}$. This completes the characterization of the local codes $C_{V_X}=(C_{x_i})_{i \in \{0,1\}}$ and of the classical Tanner code $C_X :=  T(\mathcal G_X^\Box,C_{V_X})$.\par

\paragraph{Definition of $C_Z$.} The classical Tanner code associated to $\mathcal G_Z^\Box$ is defined similarly. The graph $\mathcal G_Z^\Box=(V_Z,F)$ also has two vertices, $V_Z=\{z_{0},z_{1}\}$, which are incident to all the faces, i.e.,  $F(z_{0})=F(z_{1})=F$. Each local code associated with these vertices is set to be isomorphic to the dual of the same tensor code, $C_{z_t}\overset{\phi_{z_t}}{\cong }\Pi_Z^\perp$ where each isomorphism is induced by $\phi_{z_t}=\phi_{x_t}$, for $t=0,1$. This completes the characterization of $C_{V_Z}=(C_{z_i})_{ i\in \{0,1\}}$ and $C_Z :=  T(\mathcal G_Z^\Box,C_{V_Z})$.\par

\paragraph{Orthogonality condition.}  Let us now explain in words why the code $\operatorname{CSS}(C_X,C_Z)$ is well-defined. The first factor of each product code $\Pi_X,\Pi_Z$ is a repetition code to ensure that the checks associated with $x_0 $ and $x_1$ commute with the checks associated with the vertices $z_0$ and $z_1$, respectively.\par
Moreover, due to the identification of the left and right edges of the presentation complex, our local product structure is not strict. To remedy this problem, the linear codes need to have a cyclic symmetry in order for a check associated with $x_0$ to commute with one associated with $z_1$. More precisely, with our choice of indexing, a codeword $c_x=(1,1)\otimes c$, representing a $X$-check, and a codeword $c_z=(1,1)\otimes c'$, representing a $Z$-check, are orthogonal to each other if \[ \langle c,c'\rangle +\langle c,\rho(c')\rangle=0,\]
where the rotation map $\rho$ is defined in Section~\ref{section:local codes}. This can be written as $\langle c,c'+\rho(c')\rangle=0$, which is automatically satisfied if the second factor of $\Pi_X$ and $\Pi_Z$ are a cyclic code and its dual, respectively. Overall, these conditions ensure that $C_X^\perp\subseteq C_Z$.\par

\paragraph{Parity-check matrices.} Another way to describe $C_X$ and $C_Z$ is by the kernels of parity-check matrices $H_X$ and $H_Z$, as explained in Section~\ref{section Tanner codes and their lifts}. We denote $h$ the check polynomial of $\mathscr{C}_\ell(g)$. In a certain basis, the parity-check matrix can be written
\begin{align*}
     H_X=\left[\begin{array}{c|c} 
 \mathbb G( Xg(X))  &  \mathbb G (g(X))\\ \hline
 \mathbb G (g(X)) &  \mathbb G( g(X))
\end{array}\right], \quad \quad
H_Z=\left[\begin{array}{c|c} 
 \mathbb G( \bar h(X))  &  \mathbb G (\bar h(X))\\ \hline
 \mathbb G (X\bar h(X)) &  \mathbb G( \bar h (X))
\end{array}\right].
\end{align*}
   
It is straightforward to verify that $H_X\cdot H_Z^T=0$, using that $\mathbb G( Xg(X))$ is a cyclic permutation of $\mathbb G( g(X))$, and that $\mathbb G(g(X)) \mathbb G(\bar h(X))^T=0$.\par

\paragraph{Code lifting.} Now that we have described quantum Tanner codes constructed on the square complex $\mathcal S_\ell$, we can consider their lifts, as given in Section~\ref{section lift of quantum Tanner codes}. Our first objective is to prove that our construction behaves well under lifting.

\begin{proposition}
    Let $p:\tilde{\mathcal S}_\ell\to \mathcal S_\ell$ be a covering map and $\tilde C_X=T(\tilde{\mathcal G}_X^\Box ,C_{\tilde V_X})$, $\tilde C_Z=T(\tilde{\mathcal G}_Z^\Box ,C_{\tilde V_Z})$, the lifted classical Tanner codes associated to $p$. Then $\tilde C_X^\perp\subseteq \tilde C_Z$.
\end{proposition}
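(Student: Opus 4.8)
The plan is to reduce the inclusion $\tilde C_X^\perp\subseteq\tilde C_Z$ to a family of ``local'' orthogonality conditions, one for each pair consisting of a vertex $\tilde x$ of $\tilde{\mathcal G}_X^\Box$ and a vertex $\tilde z$ of $\tilde{\mathcal G}_Z^\Box$, and to settle each such condition by pushing the local checks involved down to $\mathcal S_\ell$ and then into the tensor space $\mathbb F_2^2\otimes\mathbb F_2^\ell$, where cyclicity of $\mathscr C_\ell(g)$ will close the argument exactly as it does for the base code $\operatorname{CSS}(C_X,C_Z)$. First I would invoke the standard description of the dual of a Tanner code: $\tilde C_X^\perp$ is the sum over $\tilde x\in\tilde V_X$ of the local duals $C_{\tilde x}^\perp$, each regarded as a subspace of $\mathbb F_2\tilde F$ by extension with zeros outside $\tilde F(\tilde x)$, and similarly for $\tilde C_Z$. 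This turns $\tilde C_X^\perp\subseteq\tilde C_Z$ into the requirement that $\sum_{\tilde f\in\tilde F(\tilde x)\cap\tilde F(\tilde z)}u_{\tilde f}w_{\tilde f}=0$ for all $\tilde x\in\tilde V_X$, $\tilde z\in\tilde V_Z$, $u\in C_{\tilde x}^\perp$ and $w\in C_{\tilde z}^\perp$.

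Next I would transport this condition down to $\mathcal S_\ell$. Each vertex of $\tilde{\mathcal S}_\ell$ is incident to exactly $2\ell$ faces, the same number as its image, and $p$ restricts to a bijection between the incident-face sets; let $\hat u,\hat w\in\mathbb F_2F$ be the images of $u,w$ under the corresponding restrictions of $p_*$. By Definitions \ref{definition Lift of classical Tanner code} and \ref{definition Lift of quantum Tanner code} we have $p_*(C_{\tilde x})=C_{p(\tilde x)}$, and since $p_*$ identifies basis vectors it also respects duals, so $\hat u\in C_{p(\tilde x)}^\perp=\phi_{p(\tilde x)}^{-1}(\Pi_X)$ and likewise $\hat w\in\phi_{p(\tilde z)}^{-1}(\Pi_Z)$. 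Because every vertex of $\mathcal S_\ell$ is incident to all of $F$, the already-verified inclusion $C_X^\perp\subseteq C_Z$ gives $\sum_{f\in F}\hat u_f\hat w_f=0$. Writing $\tilde f_x(f)$ and $\tilde f_z(f)$ for the unique lifts of a face $f$ incident to $\tilde x$, respectively $\tilde z$, one has $u_{\tilde f_x(f)}=\hat u_f$ and $w_{\tilde f_z(f)}=\hat w_f$, and hence, over $\mathbb F_2$,
\[
\sum_{\tilde f\in\tilde F(\tilde x)\cap\tilde F(\tilde z)}u_{\tilde f}w_{\tilde f}
=\sum_{f\,:\,\tilde f_x(f)=\tilde f_z(f)}\hat u_f\hat w_f
=\sum_{f\,:\,\tilde f_x(f)\neq\tilde f_z(f)}\hat u_f\hat w_f ,
\]
so it remains to show that the last sum vanishes.

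This is where the specific structure of $\mathcal S_\ell$ is used. The lifts $\tilde f_x(f)$ and $\tilde f_z(f)$ lie over the same face and differ only in their sheet label; comparing the descriptions of lifted $4$-cycles in Section \ref{section covering maps} — together with the identity $\pi_{e_4}\pi_{e_3}\pi_{e_2}\pi_{e_1}=\operatorname{id}$ holding around each of them — shows that this discrepancy is governed by the product of the covering's edge-permutations along a boundary path of $f$ joining $\tilde x$ to $\tilde z$. The point is that $\mathcal S_\ell$ subdivides the presentation complex of $\operatorname{L}(\ell)=\langle a,b\mid a^\ell b^{-\ell}\rangle$, and that the maps $\phi_{x_0},\phi_{x_1}$ (hence $\phi_{z_0},\phi_{z_1}$) were chosen precisely so that the two generating loops act on the $2\ell$ faces incident to any vertex by cyclically rotating the index $j\in[\ell]$ within each of the two rows $i\in[2]$, i.e.\ as powers of $\rho$ on the $\mathbb F_2^\ell$-factor. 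Read in these coordinates, $\sum_{f\,:\,\tilde f_x(f)\neq\tilde f_z(f)}\hat u_f\hat w_f$ decomposes along the two rows, and on each row the mismatch amounts to replacing the relevant codeword by a cyclic shift $\rho^{k}$ of it; since $\hat u$ restricts on each row to a codeword $u'\in\mathscr C_\ell(g)$ and $\hat w$ to a codeword $w'\in\mathscr C_\ell(g)^\perp$, the row-contribution equals $\langle\rho^{k}(u'),w'\rangle$, which vanishes because the cyclic code $\mathscr C_\ell(g)$ is $\rho$-invariant. Summing over the two rows gives $0$.

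The main obstacle is precisely this last step: rigorously establishing that, uniformly over $f$, the sheet-mismatch between $\tilde f_x(f)$ and $\tilde f_z(f)$ is realized by a cyclic shift of the index $j$ — equivalently, that the monodromy $\pi_1(\mathcal S_\ell)\to S_r$ of any covering, written in the coordinates furnished by the $\phi$'s, is compatible with the $\rho$-symmetry of $\Pi_X$ and $\Pi_Z$. This is exactly why the local codes were built from a cyclic code and its dual rather than from a weaker choice, and it is the one place where a careful bookkeeping of the cellulation of $\mathcal S_\ell$ and of the action of $a$ and $b$ on its faces cannot be avoided; the reduction to pairs, the push-down to $\mathcal S_\ell$, and the concluding pairing computation are all routine.
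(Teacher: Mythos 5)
Your overall strategy is sound up to a point: reducing $\tilde C_X^\perp\subseteq\tilde C_Z$ to an even-overlap condition for each pair of checks on adjacent vertices $(\tilde x,\tilde z)$ is exactly how the paper proceeds, and your observation that pushing $u,w$ down to $\mathcal S_\ell$ gives $\sum_{f\in F}\hat u_f\hat w_f=0$, and hence that the ``match'' sum $\sum_{\tilde f\in\tilde F(\tilde x)\cap\tilde F(\tilde z)}u_{\tilde f}w_{\tilde f}$ equals the ``mismatch'' sum over $\mathbb F_2$, is a correct (if roundabout) re-packaging. The paper, however, does not take this detour: it analyses the match set $\tilde F(\tilde x)\cap\tilde F(\tilde z)$ directly, and it does so by splitting into two structurally different cases according to the type of adjacency. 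For pairs projecting to $(x_0,z_0)$, the vertices are joined by lifts of the $\ell$ vertical multi-edges, and the shared faces fall into pairs attached to a common vertical edge; there the repetition-code factor $\mathscr C_2(1+X)$ in $\Pi_X$ and $\Pi_Z$ forces each check to take equal values on the two faces of a pair, giving an even overlap. For pairs projecting to $(x_0,z_1)$, the vertices are joined by a lift of a horizontal edge incident to $\ell$ faces, and on those $\ell$ faces the $X$-check restricts to a codeword of $\mathscr C_\ell(g)$ and the $Z$-check to a codeword of $\mathscr C_\ell(g)^\perp$; orthogonality of the cyclic code with its dual (together with $\rho$-invariance to absorb the shift between $\phi_{x_0}$ and $\phi_{z_1}$) closes that case.

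The gap in your argument is in the final step, which you yourself flag as the obstacle. You assert that ``on each row the mismatch amounts to replacing the relevant codeword by a cyclic shift $\rho^k$,'' and hence that each row-contribution is $\langle\rho^k(u'),w'\rangle=0$. This is not the right mechanism for the vertical-adjacency case $(\tilde x_0,\tilde z_0)$: there the mismatch set is again a union of repetition pairs, and the vanishing comes from $\hat u_f=\hat u_{f'}$, $\hat w_f=\hat w_{f'}$ on each pair (so each pair contributes $2\hat u_f\hat w_f=0$), not from a cyclic rotation of the $j$-index. Your write-up treats all adjacencies through the single lens of $\rho$-invariance of $\mathscr C_\ell(g)$, which only captures the horizontal case; moreover, even in the horizontal case the claim that the mismatch set within a row is exactly of the form ``shift by $\rho^k$'' needs an argument you do not give (it in fact follows from the fact that a whole row of $\ell$ faces is attached to a single horizontal edge, so the match set within a row is all-or-nothing). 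To repair the proof you should, as the paper does, distinguish the vertical and horizontal adjacency types and supply the repetition-code argument for the former; once you do that, the detour through the mismatch sum is also unnecessary, since the match sum can be shown to be even directly.
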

\begin{proof}
Our objective is to determine the intersection of the face neighborhood $F(\tilde x_s)\cap F(\tilde z_t)$, for any pair of vertices $\tilde x_s\in p^{-1}(x_s)$ and $\tilde z_t\in p^{-1}(z_t)$, $s,t\in \{0,1\}$, and show that the support intersection of any $X$- and $Z$-checks associated with these vertices is of even size. By symmetry of the code structure, it is sufficient to verify this condition for pairs $(\tilde x_0,\tilde z_0)$ and $(\tilde x_0,\tilde z_1)$ in $\tilde S_\ell$.\par
It is clear that any adjacent vertices $\tilde x_0,\tilde z_0$ share at least 2 faces, which are attached to an edge $\tilde e=\{\tilde x_0,\tilde z_0\}$, drawn vertically on Figure~\ref{fig:Quantum_Tanner_L}. Since an $X$-check is a codeword of the product code $\Pi_X$, if it acts on the qubit associated to one of the two faces, then it must also act on the other, and similarly for a $Z$-check. Therefore, the support of an $X$- and $Z$-check associated with the vertices $\tilde x_0 $ and $\tilde z_0$, respectively, must intersect on an even number of qubits.\par
A pair of adjacent vertices of the form $\tilde x_0,\tilde z_1$ are connected by an edge incident to $\ell$ faces in $\tilde S_\ell$. The restriction of the support vector of an $X$-check associated with $\tilde x_0$ on these faces is a codeword of $\mathscr{C}_\ell(g)$, while that of a $Z$-check associated $\tilde z_1$ is a codeword of $\mathscr{C}_\ell(g)^\perp$. These vectors being orthogonal, their support intersection is even.
\end{proof}

\begin{table}[]
\centering 
\begin{tabular}{l|l|l|l|l|l}

$\ell$&  $W$&Lift index& $\operatorname{deck}(p)$& $[[n,k,d]]$   &$d^2/n$\\
\hline \hline 
 10& 4& 1& $1$&$[[20,2,2]]$ &0.2\\
 & & 20& $D_{20}$, $\mathbb Z_5 \rtimes \mathbb Z_4$&$[[400,2,20]]$ &1\\ \hline   
  14&     12 &1     &   $1$&                     $[[28,2,6]]$ &1.25\\
 & & 4& $\mathbb Z_2 \times \mathbb Z_2$, $\mathbb Z_4$&$[[112,2,12]]$ &1.25\\
 & & 7& $\mathbb Z_7$&$[[196,2,18]]$ &1.65\\
 & & 16& $\mathbb Z_4 \rtimes \mathbb Z_4$, $\mathbb Z_8 \rtimes \mathbb Z_2$, $Q_{16}$&$[[448,2,24]]$           &1.25\\ 
  &          &28&       $\mathbb Z_{28}$&                     $[[784,2,36]]$ &1.65\\

\end{tabular}
\caption{Parameters of selected lifted quantum Tanner codes built from the space of Figure~\ref{fig:Quantum_Tanner_L}, whose homotopy group is isomorphic to $\operatorname{L}(\ell)$. Different groups in a single entry of the 4th column correspond to different lifted codes with the same parameters. Here, $W$ denotes the maximum row or column-weight of the parity-check matrices. The value of $d$ is an upper bound found with the GAP package QDistRnd \cite{Pryadko2022}, and in all these cases $d_X=d_Z$. The specific local codes involved in these constructions are described in Examples~\ref{example L1} and~\ref{example L2}. }\label{Table:lift L(l)}
\end{table}
To illustrate this construction, we instantiate the case $\ell=10$ and $\ell=14$ in the examples below, and we summarize the parameters of selected lifted codes in Table~\ref{Table:lift L(l)}.
 \begin{example}\label{example L1}
The instances for $\ell=10$ in Table~\ref{Table:lift L(l)} are obtained for a cyclic code $\mathscr{C}_\ell(g)$ with generating polynomial $g(X)=1+X^5$, and check polynomial $h(X)=g(X)$. Used as a local code, this produces Tanner codes with check-weight 4. As in the Tanner version of the toric code, they could be seen as rotated versions of certain topological codes.
\end{example}

\begin{example}\label{example L2}
    The instances for $\ell=14$ in Table~\ref{Table:lift L(l)}, are obtained by using the shortest example of non-trivial cyclic self-dual binary code \cite{sloane_cylic}. This has parameters $[14,7,4]$, and is generated by the polynomial 
\[g(X)=(X+1)(X^3 +X+1)^2=1+X+X^2+X^3+X^6+X^7.\]
The generator matrix associated to $g$ has row-weight $6$, but by elementary operations, we can turn it into a full-rank generator matrix with $6$ rows of weight $4$, and one row of weight $6$. This means that for the associated quantum Tanner code and its lifts, one-seventh of the checks are of weight $12$ and the others are of weight $8$. We emphasize that the last four lines of Table~\ref{Table:lift L(l)} represent LDPC codes with distance $d>\sqrt{n}$.
\end{example}

\begin{remark}\label{remark quantum Tanner lift beyond other lift}
In the base code, an $X$- and a $Z$-check overlap on all the qubits, while in the lifted code, their overlap must be reduced, otherwise the square complex would not remain connected. This shows that these codes, which do not satisfy the assumption of Proposition \ref{proposition quantum Tanner lifting simply connected closure}, constitute examples of lifts beyond the technique of \cite{GuemardLiftIEEE}, where lifting a code preserved the size of the overlap between any pair of $X$- and $Z$-checks.
\end{remark}

\subsection{Quantum Tanner code with double-circulant local code}\label{section:Quantum Tanner code with double-circulant local code}

\begin{figure*}[t]
  \centering
 \includegraphics[scale=1]{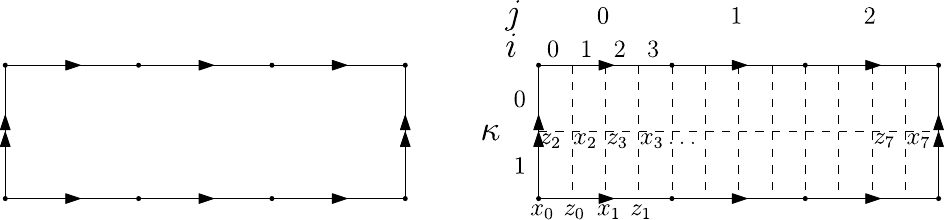}
  \caption{Left: presentation complex $M_\ell$ of $\operatorname{BS}(\ell,\ell)$ for $\ell=3$. Right: its subdivision into a bipartite square complex $\mathcal S_3$, and the indexing of the faces by $(i,j,\kappa)\in [4]\times \mathbb [\ell]\times \mathbb [2]$.}
  \label{fig:Quantum_Tanner_BS}
\end{figure*}

Our second construction is based on the presentation complex $M_\ell$ of the Baumslag-Solitar group defined as
\[\operatorname{BS}(\ell,\ell)=\langle a,b | ab^\ell a^{-1}b^{-\ell}\rangle,\]
where $\ell \in \mathbb N$, combined with a set of local tensor-product codes.

\paragraph{Square complex.} The presentation complex of $\mathrm{BS}(\ell)$ is constructed on the model of Figure~\ref{fig:Quantum_Tanner_L}, which is for $\ell=3$. For higher values of $\ell$, we identify $\ell$ segments along the top and $\ell$ along the bottom. The presentation complex always has two directed edges, $e_a$ and $e_b$, associated with the generators $a$ and $b$ of the group presentation, respectively. It is modified into a bipartite square complex by subdividing the unique face into two rows of $4\ell$ square faces. This process also subdivides the horizontal edge $e_b$ into four edges (indexed by $[4]$), and the vertical edge $e_a$ into two edges (indexed by $[2]$). This also adds new new horizontal and vertical edges. This forms the bipartite square complex $\mathcal S_\ell := (V=V_X\sqcup V_Z,E,F)$ having $|V|=4\ell+4$, $|E|=8\ell+4$, and $|F|=8\ell$. Each new vertex of the subdivided edge $e_b$ is incident to $4\ell$ faces, and each vertex along the middle cycle of edges is incident to 4 faces. The two diagonal graphs $\mathcal G_X^\Box$ and $\mathcal G_Z^\Box $ of $\mathcal S_\ell$ are depicted in Figure~\ref{fig:diagonal graph L(l)}\par

\begin{figure}[t]
    \centering
    \includegraphics[width=0.5\linewidth]{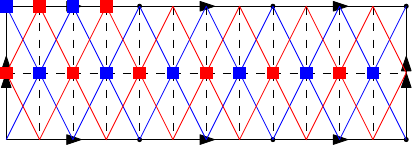}
    \caption{Diagonal graphs of the complex $\mathcal S_\ell$ corresponding to the group $\mathrm{BS}(\ell)$ for $\ell=3$.}
    \label{fig:enter-label}
\end{figure}

We identify the faces of this complex with elements of the product set $[4]\times [\ell]\times [2] $, where $(i,j,\kappa)$ represents the face in the upper row if $\kappa=0$ and in the upper row if $\kappa=1$, placed in the $(i+1)\cdot (j+1)$-th position when counting from left to right in Figure~\ref{fig:Quantum_Tanner_BS}. The face $(i,j,\kappa)$ is therefore incident to the edge indexed $i$ in the subdivision of $e_b$. The $8\ell$ edges of the diagonal graphs $\mathcal G_X^\Box $ and $\mathcal G_Z^\Box$ inherit the indexing of the faces of $\mathcal S_\ell$.\par

\paragraph{Local codes.} To construct the local codes, we consider $\mathscr{C}_2(1+X)$, the repetition code of length $2$, and the double-circulant code $\mathscr{D}_{2\ell}(f)$, with polynomial $f\in R_\ell$. We define the tensor codes $\Pi_X, \Pi_Z\subseteq \mathbb F_2^{2}\otimes\mathbb F_2^{2\ell}$, where 
\begin{align*}
    \Pi_X&=\mathscr{C}_2(1+X)\otimes\mathscr{D}_{2\ell}(f) \\
    \Pi_Z&=\mathscr{C}_2(1+X)\otimes\mathscr{D}_{2\ell}(f)^\perp.
\end{align*}
Recall that $\mathbb F_2^{2}\otimes\mathbb F_2^{2\ell}$ is endowed with the product basis indexed by the set $[2]\times [2\ell]$ as described in Section~\ref{section:local codes}. We will subsequently justify why this particular choice of product codes ensures orthogonality of $X$- and $Z$-checks.\par

\paragraph{Definition of $C_X$.} We first define a classical Tanner code on $\mathcal G_X^\Box=(V_X,F)$, where $V_X=(x_i)_{i\in [2\ell+2]}$. This graph has two-vertices of degree $4\ell$, labeled $x_0$ and $x_1$, which result from subdividing the edge $e_b$. Their edge neighborhood in $\mathcal G_X^\Box$ is equal to their face neighborhood in $\mathcal S_\ell$, and they satisfy $F(x_{0})\cap F(x_{1})=\emptyset$, $F(x_{0})\cup F(x_{1})=F$.\par
We notice a natural local product structure relative to the vertices of $V_X$. These are defined by the bijections
\begin{equation*}
   \phi_{x_0}:\begin{cases}
   \hfill F(x_0)\hfill&\to [2]\times [2\ell]\\
     (0,j,\kappa)&\mapsto (0,\kappa\ell +j)\\
     (3,j,\kappa)&\mapsto (1,\kappa\ell + (j+1\operatorname{mod}\ell))
    \end{cases} \quad \text{and} \quad 
  \phi_{x_1}:\begin{cases}
      \hfill F\hfill&\to [2]\times [\ell]\\
      (i,j,\kappa)&\mapsto (i\operatorname{mod}2,\kappa\ell +j).
   \end{cases} 
\end{equation*}
Each local code associated with these vertices is set to be isomorphic to the dual of the tensor code $C_{x_t}\cong \Pi_X^\perp$, where the isomorphism is obtained by extending $\phi_{x_t}$ by linearity into $\phi_{x_t}:\mathbb F_2 F(x_t) \to \mathbb F_2^{2}\otimes\mathbb F_2^{2\ell}$ for $t=0,1$,  sending each face to a canonical basis vectors.\par

The graph $\mathcal G_X^\Box$ also has $2\ell$ vertices of degree $4$, labeled $x_t$, $t=2,\dots, 2\ell+1$. They belong to the central horizontal cycle of edges, as shown in Figure~\ref{fig:Quantum_Tanner_BS}. For all of them, we set $C_{x_t}$ to be the parity code of length 4, i.e. the single parity check assigned to vertex $x_t$ has full support over its face neighborhood $F(x_t)$ in $\mathcal S_\ell$. This completes the characterization of $C_{V_X}=(C_{x_i})_{i\in [2\ell+1]}$ of the classical Tanner code $C_X :=  T(\mathcal G_X^\Box,C_{V_X})$.\par

\paragraph{Definition of $C_Z$.}The classical Tanner code associated to  $\mathcal G_Z^\Box$ is defined similarly. The graph $\mathcal G_Z^\Box=(V_Z,F)$, where $V_Z=\{z_i,\:i\in [2\ell+2]\}$, has two vertices of degree $4\ell$ labeled $z_0$ and $z_1$. They satisfy $F(z_{0})\cap F(z_{1})=\emptyset$, $F(z_{0})\cup F(z_{1})=F$.\par
We notice again a natural local product structure relative to the vertices of $V_Z$, that is defined by the bijections
\begin{equation*}
    \phi_{z_t}:\begin{cases}
        F(z_t)&\to [2]\times [2\ell]\\
          (i,j,\kappa)&\mapsto (i\operatorname{mod}2,\kappa\ell +j),
    \end{cases}
\end{equation*}
for $t=0,1$. For each of these vertices, we define a local code $C_{z_t}\cong\Pi_Z^\perp$, where this isomorphism is obtained by extending $\phi_{z_t}$ into a map $\phi_{z_t}:\mathbb F_2 F(z_t) \to\mathbb F_2^{2}\otimes\mathbb F_2^{2\ell}$. The graph $\mathcal G_Z^\Box$ has also $2\ell$ vertices of degree $4$, labeled $z_t$, $t=2,\dots ,2\ell+1$. They belong to the central horizontal cycle of edges, as depicted in Figure~\ref{fig:Quantum_Tanner_BS}. For them, we set $C_{z_t}$ to be the parity code of length 4. This completes the characterization of $C_{V_Z}=(C_{z_i})_{i\in [2\ell+1]}$ and $C_Z :=  T(\mathcal G_Z^\Box,C_{V_Z})$.\par

\paragraph{Orthogonality condition.} We next explain why the code $\operatorname{CSS}(C_X,C_Z)$ is well-defined. To begin with, the first factor of the product codes $\Pi_X,\Pi_Z$ is a repetition code to ensure that the checks associated with $x_t,z_t,t=0,1$ commute with the middle checks, those for $t=2,\dots ,2\ell+1$.\par
Next, due to the identification of the left and right edges of the presentation complex, our local product structure is not strict. To resolve this issue, the second linear codes must have a simultaneous cyclic symmetry for $\kappa=0,1$ so that a check associated with $x_0$ commutes with one associated with $z_1$. To be precise, following our choice of indexing for the local codes, the condition for a codeword $c_x=(1,1)\otimes c$, representing an $X$-check, and a codeword $c_{z}=(1,1)\otimes c'$, representing a $Z$-check, to be mutually orthogonal is
\[ \langle c,c'\rangle=0\:\text{and}\:\langle \sigma(c),c'\rangle=0,\]
where the map $\sigma$ is defined in Section~\ref{section:local codes}. This is automatically satisfied if the second factor of $\Pi_X$ and $\Pi_Z$ are a double-circulant code and its dual, respectively. The two conditions above thus ensure $C_X^\perp\subseteq C_Z$.\par

\paragraph{Code lifting.} Now that we have described quantum Tanner codes constructed over the square complex $\mathcal S_\ell$, we can consider their lifts, as in Section~\ref{section lift of quantum Tanner codes}. Our first objective is to prove that our construction behaves well under the operation of code lifting.

\begin{proposition}
Let $p:\tilde{\mathcal S}_\ell\to \mathcal S_\ell$ be a covering map and $\tilde C_X=T(\tilde{\mathcal G}_X^\Box ,C_{\tilde V_X})$, $\tilde C_Z=T(\tilde{\mathcal G}_Z^\Box ,C_{\tilde V_Z})$ be the lifted classical Tanner codes associated to $p$. Then $\tilde C_X^\perp\subseteq \tilde C_Z$.
\end{proposition}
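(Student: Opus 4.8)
The plan is to mimic the proof given for the analogous proposition in the $\operatorname{L}(\ell)$ construction, adapted to the more intricate bipartite square-complex $\mathcal S_\ell$ arising from the Baumslag-Solitar presentation. The strategy is to check, for every pair of adjacent vertices $\tilde x_s\in p^{-1}(x_s)$ and $\tilde z_t\in p^{-1}(z_t)$ of $\tilde{\mathcal S}_\ell$, that the intersection $F(\tilde x_s)\cap F(\tilde z_t)$ of their face-neighborhoods supports an even overlap between any $X$-check carried by $\tilde x_s$ and any $Z$-check carried by $\tilde z_t$; summing over all vertices then gives $\tilde H_X \tilde H_Z^T=0$, i.e. $\tilde C_X^\perp\subseteq\tilde C_Z$. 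Since the local codes are pulled back isomorphically along $p_X,p_Z$ by Definition \ref{definition Lift of classical Tanner code}, and since an edge-neighborhood in a covering is mapped bijectively onto the edge-neighborhood of its projection by $p_*$, it suffices to understand these overlaps locally, as they occur in $\mathcal S_\ell$ itself — lifting cannot create new adjacencies that are not projections of adjacencies downstairs.

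First I would enumerate the types of adjacent $X/Z$ vertex pairs in $\mathcal S_\ell$. There are the two high-degree vertices $x_0,x_1$ (resp. $z_0,z_1$) coming from the subdivision of $e_b$, and the $2\ell$ degree-$4$ vertices $x_t,z_t$ ($t\ge 2$) on the central circuit. The relevant pairs to analyze are: (i) a high-degree $X$-vertex with a high-degree $Z$-vertex, (ii) a high-degree $X$-vertex with a degree-$4$ central $Z$-vertex (and symmetrically), and (iii) two central degree-$4$ vertices. For case (iii) and case (ii) with both vertices on the central circuit, the argument is the same as in the $\operatorname{L}(\ell)$ proof: a central vertex carries the length-$4$ parity code, and whenever its face-neighborhood meets another vertex's face-neighborhood it does so in the $0$ or $2$ faces attached to a common vertical or horizontal edge; a parity-code word restricted to a single such edge-pair has even weight, and a product-code word (first factor the repetition code $\mathscr C_2(1+X)$) likewise acts evenly on any $\kappa=0,1$ pair sharing a face — so all these overlaps are even.

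The substantive case is (i), a lifted copy of $x_0$ adjacent to a lifted copy of $z_1$ (up to the code's symmetry one reduces to the pairs $(\tilde x_0,\tilde z_0)$ and $(\tilde x_0,\tilde z_1)$, exactly as in the $\operatorname{L}(\ell)$ proof). Here one must read off, from the indexing maps $\phi_{x_0},\phi_{x_1},\phi_{z_0},\phi_{z_1}$ given above, which faces lie in $F(\tilde x_0)\cap F(\tilde z_1)$ and how $\phi_{x_0}$ and $\phi_{z_1}$ identify that face set with a subset of $\mathbb F_2^2\otimes\mathbb F_2^{2\ell}$. The point is that, because of the left-right identification of $e_b$ in the presentation complex and the shift by $j+1\bmod\ell$ appearing in $\phi_{x_0}$, the two vertices share a set of $\ell$ (or $2\ell$, across the two $\kappa$-blocks) faces on which an $X$-check restricts to a codeword of $\mathscr D_{2\ell}(f)$ and a $Z$-check to a codeword of $\mathscr D_{2\ell}(f)^\perp$ — in the \emph{same} coordinate labeling, so that orthogonality of $\mathscr D_{2\ell}(f)$ and its dual forces the overlap to be even. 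This is precisely where the double-circulant structure (rather than an arbitrary code) is needed, mirroring the sentence in the text that the second tensor factor ``must necessarily be, respectively, a double-circulant code and its dual.'' The main obstacle I anticipate is purely bookkeeping: verifying that the composite index maps $\phi_{z_1}\circ\phi_{x_0}^{-1}$ on the shared faces is the identity on $[2]\times[2\ell]$ (equivalently that any relabeling is an automorphism of $\mathscr D_{2\ell}(f)$ — e.g. a simultaneous cyclic shift within the two length-$\ell$ blocks, which lies in $\operatorname{Aut}(\mathscr D_{2\ell}(f))$ as noted in Section \ref{section:local codes}), so that ``orthogonal in one labeling'' really does imply ``orthogonal.'' Once that is pinned down, the remaining pairs are routine and the proposition follows.
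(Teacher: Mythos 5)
Your proposal follows essentially the same route as the paper's proof: reduce to the even-overlap condition on face-neighborhood intersections of adjacent lifted vertex pairs, handle the pairs involving a central degree-$4$ vertex via the repetition-code factor of $\Pi_X$ together with the full-support parity check, and handle the high-degree pair $(\tilde x_0,\tilde z_1)$ via orthogonality of $\mathscr D_{2\ell}(f)$ and its dual. Your explicit flagging of the labeling issue (that $\phi_{z_1}\circ\phi_{x_0}^{-1}$ must be a simultaneous cyclic shift, hence an automorphism of $\mathscr D_{2\ell}(f)$) and your inclusion of the central--central case are points the paper's proof leaves implicit, but the argument is the same.
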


\begin{proof}
Our objective is to determine the intersection of the face neighborhood $F(\tilde x_s)\cap F(\tilde z_t)$, for any pair of vertices $\tilde x_s\in p^{-1}(x_s)$ and $\tilde z_t\in p^{-1}(z_t)$, $s,t\in \{0\dots 2\ell+1\}$, and show that the support intersection of any $X$- and $Z$-checks associated with these vertices is of even size. By symmetry of the code, it is sufficient to check for pairs $(\tilde x_0,\tilde z_0)$ and $(\tilde x_0,\tilde z_2)$ in $\tilde S_\ell$.\par
It is clear that any adjacent vertices $\tilde x_0,\tilde z_2$ are incident to either two or four common faces. It is possible to partition this set of faces into pairs of adjacent faces and sharing an edge $\tilde e$ whose image by $p$ is an edge $\{x_0,z_2\}$ represented vertically in Figure~\ref{fig:Quantum_Tanner_BS}. Since an $X$-check is a codeword of the product code $\Pi_X=\mathscr{C}_2(1+X)\otimes\mathscr{D}_{2\ell}(f)$, if it acts on the qubit associated with one of the two faces of a given pair, then it must also act on the other. Moreover, a $Z$-check has full support on the qubits associated to the face neighborhood of $\tilde z_2$ . Therefore, the support of an $X$-and a $Z$-check associated with the vertices $\tilde x_0 $ and $\tilde z_0$, respectively, must intersect on an even number of qubits.\par
A pair of adjacent vertices of the form $\tilde x_0,\tilde z_1$ are connected by an edge incident to $2\ell$ faces in $\tilde S_\ell$. The restriction of the support vector of an $X$-check associated with $\tilde x_0$ is a codeword of $\mathscr{D}_{2\ell}(f)$ on these $2\ell$ faces, while that of a $Z$-check associated with $\tilde z_1$ is a codeword of $\mathscr{D}_{2\ell}(f)^\perp$. These vectors being orthogonal, their support intersection is even.
\end{proof}

\begin{table}[]
\centering 
\begin{tabular}{l|l|l|l|l|l}

$\ell$&  $W$&lift index& $\operatorname{deck}(p)$& $[[n,k,d]]$ &$d^2/n$\\\hline \hline 
 3& 6& 1& $1$&$[[24,0,\infty]]$&$\infty$\\
 & & 12& $\mathbb Z_{12}$, $D_{12}$, $\mathbb Z_3 \rtimes \mathbb Z_4$&$[[288,4, 6]]$ &0.125\\
 & & 24&  $\mathbb Z_2\times (\mathbb Z_3 \rtimes \mathbb Z_4)$,&$[[576,4,24]]$ &1\\
 & & & $\mathbb Z_3 \rtimes \mathbb Z_8$, $\mathbb Z_4\times S_3$& &\\ \hline  
  4&     8 &1     &   $\{e\}$&                     $[[32,2,4]]$          &0.5\\  
  &          &3&       $\mathbb Z_3$&                     $[[96,2,12]]$           &1.5\\   
  &          &5&       $\mathbb Z_5$&                     $[[160,2,16]]$          &1.6\\ 

\end{tabular}
\caption{Parameters of selected lifted quantum Tanner codes built over the presentation complex of $\operatorname{BS}(\ell,\ell)$. Different groups in an entry of the 4th column correspond to different lifted codes with the same parameters. Here $W$ denotes the maximum row or column-weight of the parity-check matrices. The value of $d$ is an upper bound found with the GAP package QDistRnd \cite{Pryadko2022}, and in all these cases, $d_X=d_Z$. The specific local codes involved in these constructions are described in Examples~\ref{example BS1} and~\ref{example BS2}. }\label{Table:lift BS(l)}
\end{table}

\begin{figure}[t]
    \centering
    \includegraphics[width=0.5\linewidth]{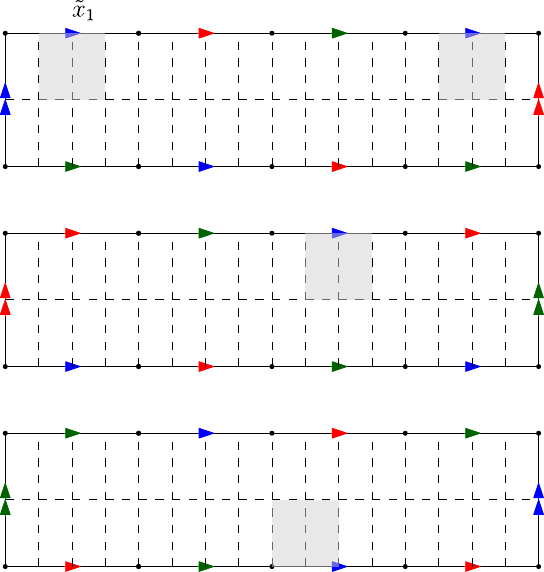}
    \caption{Covering space of the square complex $\mathcal S_4$ associated to the group $\mathbb Z_3$ in Table \ref{Table:lift BS(l)} and Example \ref{example BS2}. Directed edges with an arrow of the same color and type are identified. The support of an $X$-stabilizer associated with the local code defined in $\tilde x_1\in p^{-1}(x_1)$ is represented by the eight gray faces. The other checks of weight 8 are obtained by cyclic shifts of this pattern. Each vertex of the middle cycle supports a check of weight four (on its entire face neighborhood). This leads to the code with parameters $[[96,2,12]]$. This figure directly provides a planar layout of the physical qubits.}
    \label{fig:quantum_chip}
\end{figure}

We examine two specific instances and their lifts in the examples below, for $\ell=3$ and $\ell=4$. We list the parameters of selected lifted codes in Table~\ref{Table:lift BS(l)}.\par

\begin{example}\label{example BS1}
The smallest non-trivial quantum Tanner code arises for $\ell=3$. In Table~\ref{Table:lift BS(l)}, the double-circulant code used in the tensor product is either the $[6,3,3]$ code $\mathscr{D}_{6}(f)$ with $f(X)=X+X^2$, or its dual. In this case, we notice that the quantum Tanner code has zero dimension. However, by lifting it, we find codes of check weight $6$ with positive dimension.
\end{example}

The following example, corresponding to the last three rows of Table~\ref{Table:lift BS(l)} constitutes our best lifted codes.

\begin{example}\label{example BS2}
  The case $\ell=4$ in Table~\ref{Table:lift BS(l)} is obtained by using the shortest example of non-trivial double circulant self-dual binary code. This is the $[8,4,4]$ extended Hamming code $\mathscr{D}_8(f)$ with $f(X)=X+X^2+X^3$. The generator matrix of $\mathscr{D}_8(f)$ has row-weight 4, meaning that each of the classical Tanner codes $C_X$ and $C_Z$ has 16 parity checks, 8 of which being of weight $8$. The other 8 are defined on vertices indexed by $t=2,\dots, 9$ and are checks of parity codes of weight $4$. The weights of the parity-checks are found in the same proportions in the lifted codes. We emphasize that in the last two lines of Table~\ref{Table:lift BS(l)}, the codes are obtained by cyclic lifts, and their distances satisfy $d>\sqrt{n}$.\par
  In Figure \ref{fig:quantum_chip}, we illustrate the covering of $\mathcal S_4$ associated with the $[[96,2,12]]$ code. Because of the symmetry of the non-local checks, all obtained by cyclic shifts of the one represented in the figure, this may provide a biplanar layout of the physical qubits suitable for experimental implementation. We leave a rigorous analysis for future work.
\end{example}
 \begin{remark}
The codes introduced in this section do not satisfy the assumption of Proposition~\ref{proposition quantum Tanner lifting simply connected closure} and also constitute examples beyond the lifting technique of \cite{GuemardLiftIEEE}, similarly to those described in Remark~\ref{remark quantum Tanner lift beyond other lift}.
\end{remark}

Propositions~\ref{proposition transfer homomorphism} and~\ref{proposition parameter bound} can be applied to the last two codes of Table~\ref{Table:lift BS(l)}, corresponding to Example~\ref{example BS2}. In those cases, the codes with parameters $[[96,2,12]]$ and $[[160,2,16]]$ are index-$t$ lifts with $t=3$ and $t=5$, respectively, and the group of deck transformations is $\Gamma=\mathbb Z_t$. Using Proposition~\ref{proposition transfer homomorphism}, we obtain a lower bound on the parameters of these codes, which is consistent with $k=2$. Moreover, since in those cases lifting preserves the dimension of the base code, $\pi^*$ and $\tau_*$ are isomorphisms. Therefore, the logical operators, which correspond to classes in $H^i(\tilde C)$ and $H_i(\tilde C)$, respectively, are invariant under cyclic permutations of $\Gamma$. Finally, the $[[96,2,12]]$ code saturates the distance upper bound given by Proposition~\ref{proposition parameter bound}.

\section*{Acknowledgement}
The first author would like to thank Benjamin Audoux and Anthony Leverrier for valuable discussions throughout this work.  We acknowledge the Plan France 2030 through the project NISQ2LSQ ANR-22-PETQ-0006.
\newpage
 
\bibliographystyle{alpha}
\newcommand{\etalchar}[1]{$^{#1}$}

\newpage

\appendix

\section{Connected covering of a square-complex}

In this section, we describe a procedure to explicitly construct a connected covering of a square-complex, which was introduced in \cite{GuemardLiftIEEE}. It is based on the voltage assignment technique of \cite{GROSS1977273,Panteleev2021} and adapted from \cite{Lyndon2001,HatcherTopo} to generate connected covering maps.\par

We consider a connected square-complex $\mathcal S=(V,E,F)$ which has finitely many cells. Recall that the graph $\mathcal G=(V,E)$, can be seen as the 1-skeleton of $\mathcal{S}$. The fundamental group of a square-complex is defined in Section \ref{Graphs, square-complexes and covering maps}.\par
From the Galois correspondence, Theorem \ref{Theorem Galois correspond}, connected coverings of $\mathcal S$ are in one-to-one correspondence with subgroups of $\pi_1(\mathcal{S},v)$, where $v$ is an arbitrary choice of basepoint that we later omit. Let $H$ be a subgroup of $\pi_1(\mathcal S)$. There exists a unique covering map, 
\begin{equation}\label{equation covering}
p:\mathcal{S}_H \rightarrow \mathcal{S},
\end{equation}
associated to $H$, and it has index $r=[\pi_1(\mathcal S):H]$. To construct the covering above we will construct one for $\mathcal G$. For that, we have to consider the following subgroups and homomorphisms,
\[
\begin{tikzcd}[column sep=0cm, row sep=0.6cm]
  &  \pi_1(\mathcal S) 
  & \geq 
  &   H \arrow{d}{\phi^{-1}}
\\
  & \pi_1(\mathcal G) \arrow{u}{\phi} 
  & \geq
  & H^1
\end{tikzcd},
\]
where $\phi$ is the natural homomorphism taking the quotient of $\pi_1(\mathcal G)$ by the normal closure of the subgroup generated by relations coming from 4-cycles in $\mathcal{S}$. Moreover, $H^1$ is the preimage of $H$ in $\pi_1(\mathcal G)$ by this homomorphism. We now denote $\Gamma:=\pi_1(\mathcal{K} )/H$ and $\Gamma^1:=\pi_1(\mathcal G)/H^1$. When $H$ is normal, $\Gamma $and $\Gamma^1$ are isomorphic, otherwise this is a bijection between the cosets (here, all cosets are taken on the right). \par
To construct the covering map of Equation \eqref{equation covering}, it is shown in \cite{Lyndon2001} that we only need to construct the following covering,
\begin{equation}\label{Equation covering of 1-skeleton}
p_{1}:\mathcal G_{H^1}\rightarrow \mathcal G,
\end{equation}
which is unique by Theorem \ref{Theorem Galois correspond} .\par
We explain the steps to create the components of Equation \eqref{Equation covering of 1-skeleton}, and we detail a combinatorial technique specifically designed to produce connected coverings. The graph $\mathcal G_{H^1}$ has a set of vertices $V\times \Gamma^1$, and set of edges in bijection with $E\times \Gamma^1$. We now describe how to connect the edges. Since $\mathcal G$ is a connected graph, it admits a maximal spanning tree $T$ with basepoint $v$, and by definition a unique reduced path from vertex to vertex. We denote $E^*$ the set of oriented edges (its cardinality is twice the one of $E$).\par
We define a map called \textit{voltage assignment} on the set of oriented edges,
\begin{equation}\label{Equation modified voltage assignment}
    \nu^T: E^* \to \pi_1(\mathcal G),
\end{equation}
mapping an edge $e=[v_1,v_2]$ to an element of $\pi_1(\mathcal G)$ in the following manner. Let a path $\alpha$ from vertex $a$ to $b$ be denoted as a sequence of oriented edges, for example $\alpha=[a,u_1].[u_1,u_2]\dots [u_j,b]$. Moreover, let $((a,b))$ denote the path from $a$ to $b$ in the tree $T$. Then the equivalence class of circuits obtained by adding $e$ to $T$ is $\nu^T(e):=[((v,v_1)).e.((v_2,v))]$, where $[\alpha]$ is the standard notation for the homotopy class of a circuit $\alpha$ based at $v$.\par
We define a connected lift of the graph $\mathcal G$ associated to a certain subgroup of its fundamental group, using the notation above.

\begin{definition}[Connected lift of a graph]\label{definition modified lift of graph}
The lift of $\mathcal G$ associated to the subgroup $H^1\leq \pi_1(\mathcal G)$ is the graph $\mathcal G_{H^1}$ with set of vertices $V\times \Gamma^1$ and set of edges in bijection with $E\times \Gamma^1$, so that a vertex or edge is written $(c,gH^1)$ with $c\in \mathcal G$ and $g\in \pi_1(\mathcal G)$. An oriented edge $\tilde e=(e,gH^1)$ in the graph $\mathcal G_{H^1}$, with $e=[u,v]$,  connects $(u,gH^1)$ and $(v, g \nu^T(e)H^1)$, where the multiplication by $\nu^T(e)$ is on the right. 
\end{definition}
When $H^1$ is a normal subgroup of $\pi_1(\mathcal G)$, the resulting covering is Galois.\par
In general, a graph admits more than one spanning tree. Following this procedure with another choice of spanning tree will modify the parametrization of the vertices and edges, but the graphs will be isomorphic, by the Galois correspondence. If $H^1$ is chosen as an arbitrary subgroup of $\pi_1(\mathcal G)$, it can be shown \cite{GuemardLiftIEEE, Lyndon2001} that Definition \ref{definition modified lift of graph} can generate all connected regular and non-regular covers of $\mathcal G$. We define $p_{1}$ as the map projecting any cell of $\mathcal G_{H_1}$, vertex or edge, to its first coordinate. This is a valid covering map, equivalent to what we have done in Section \ref{section covering maps}.\par

Here $H^1$ is not arbitrary; it is the preimage of $H$ by $\phi$. It is also possible to prove that $p_{1}$ induces an isomorphism $p_{1\#}:\pi_1(\mathcal G_{H^1},\tilde v)\to H^1$, for any choice of basepoint $\tilde v$ in the preimage of $v$, see \cite{Lyndon2001}, Chapter III. 3. Moreover, for a 4-cycle $f\in F$, we can show that $p_1^{-1}(f)$ is a set of $r$ 4-cycles, and hence $p_1$ can be extended into a covering of the square-complex $\tilde{\mathcal S}:=(p_1^{-1}(V),p_1^{-1}(E), p_1^{-1}(F))$. It is finally proven in \cite{Lyndon2001} that $p_1$ induces an isomorphism $p_{1\#}:\pi_1(\mathcal S_H,v')\to H$ so that $p_H=p_1$ and $\tilde{\mathcal S}=\mathcal S_H$. \par

\end{document}